\documentclass[twocolumn,10pt,letterpaper,superscriptaddress,amsmath,amssymb,prb,
pop,
floatfix,
showkeys]{revtex4-2}

\usepackage{amsthm}
\usepackage{graphicx}       
\usepackage{booktabs}       
\usepackage{dcolumn}        
\usepackage{bm}             
\usepackage{float}          
\usepackage[T1]{fontenc}    
\usepackage{xcolor}         
\usepackage{textcomp,gensymb} 
\usepackage{lineno}         
\usepackage{soul}           
\usepackage{ragged2e}       
\usepackage{esint}
\usepackage{cuted} 
\usepackage{tikz}
\usepackage{tikz-3dplot}
\usepackage{braket}

\usepackage{hyperref}       

\hypersetup{
    colorlinks=true,
    linkcolor=blue,
    filecolor=magenta,
    urlcolor=cyan
}
\urlstyle{same}             

\newcounter{para}

\newtheorem{theorem}{Theorem}[section]  

\theoremstyle{definition}

\theoremstyle{remark}

\begin{document}

\title{Equivalence class of Emergent Single Weyl Fermion in 3d Topological States: 
\\ gapless superconductors and superfluids Vs chiral fermions}

\author{Gabriel Meyniel} 
\email{gabriel.meyniel@ens.psl.eu}
\affiliation{École Normale Supérieure - PSL, physics department, Paris, France}
\affiliation{Department of Physics and Astronomy, University of British Columbia,
6224 Agricultural Road, Vancouver, BC, V6T 1Z1, Canada}

\author{Fei Zhou}
\email{feizhou@phas.ubc.ca}
\affiliation{Department of Physics and Astronomy, University of British Columbia,
6224 Agricultural Road, Vancouver, BC, V6T 1Z1, Canada}

\date{\today}

\begin{abstract}
In this article, we put forward a practical but generic approach towards constructing a large family of $(3+1)$ dimension lattice models which can naturally lead to a single Weyl cone in the infrared (IR) limit. Our proposal relies on spontaneous charge $U(1)$ symmetry breaking via a pair condensate to evade the usual no-go theorem of a single Weyl cone in a 3d lattice. 
We have explored three concrete paths in this approach, all involving fermionic topological symmetry protected states (SPTs).
Path \textbf{a}) is to push a gapped SPT in a 3d lattice with time-reversal symmetry (or $T$-symmetry) to a gapless topological quantum critical point (tQCP) which involves a minimum change of topologies, i.e. $\delta N_w=2$
where $\delta N_w$ is the change of winding numbers across the tQCP. Path \textbf{b}) is
to peel off excessive degrees of freedom in the gapped SPT via applying $T$-symmetry breaking fields which naturally result in a pair of gapless nodal points of real fermions. Path \textbf{c}) is a hybrid of \textbf{a}) and \textbf{b}) where tQCPs, with $\delta N_w \geq 2$, are further subject to time-reversal-symmetry breaking actions.
We identify detailed phase boundaries of emergent single Weyl fermions in these lattice models. 
The whole family of lattice models with single Weyl fermions here can be effectively encoded in two dual copies of (1, 1) representations of a Spin(4) group.
In the infrared limit, all the lattice models with single Weyl fermions studied here are isomorphic to either a tQCP in a DIII class topological superconductor with a protecting $T$-symmetry, or its dual, a $T$-symmetry breaking superconducting nodal point phase, and therefore form an equivalence class.
For a generic $T$-symmetric tQCP along Path \textbf{a}), the conserved-charge operators span a six-dimensional linear space while for a $T$-symmetry breaking gapless state along Path \textbf{b}), \textbf{c}), charge operators typically span a two-dimensional linear space instead.
Finally, we pinpoint connections between three spatial dimensional lattice chiral fermion models and gapless real fermions that can naturally appear in superfluids or superconductors studied previously. 

\end{abstract}

\maketitle

\bigskip

\section{Introduction}

It has been well known that protecting symmetries $G_p$ and resultant Symmetry Protected States (SPT) greatly enlarge the family of topological phases of matter\cite{Chen2012,Schnyder2008,Schnyder2011,Kitaev2009,Fu2007,Moore2007,Qi2008,Qi2009,Hasan2010,Fidkowski2010,Fidkowski2011,Pollman2012,Wang2014,Metlitski2015,Song2017,Fidkowski2013,Wen2013,Bernevig2013,Wen2017}. One of the most obvious consequences of SPTs is the emergence of unique classes of topological quantum phase critical points (tQCPs) in both low and high dimensions \cite{Jiang2018,Verresen2018,Tsui19, Bi2019,Thorngren2021,Xu13}.
These tQCPs signify a change of global topologies, rather than a change of conventional ordering as in a more standard Landau paradigm of order-disorder transitions.
Other striking examples of tQCPs include the ones in two- and three-dimensional topological superconductors where the off-diagonal long range order remains the same while the global topology undergoes an abrupt change due to collapse of standard fully gapped BdG quasi-particles into gapless {\em real} fermions\cite{Yang_Jiang_Zhou,Yang_Zhou,Zhou1,Yang2025}.

A very puzzling and distinct feature of tQCPs in SPTs is perhaps the appearance of emergent symmetries, anomalous symmetries that are uniquely related to gapless tQCPs. These anomalous symmetries can not appear in a gapped SPT with the same protecting symmetry and are non-on site ones. It is highly unique, especially in high-dimensions, e.g. in 3d lattices.

During the last few years, emergent symmetries have also been introduced as enriching new elements in the Landau paradigm of order-disorder transitions\cite{Ji2020,Chatterjee2021}. However, in high-dimensions say 3d lattices, those emergent symmetries typically have higher forms, higher than the standard 0-form and have been noted as generalized symmetries. The emergent symmetries at tQCPs we will utilize below for the purpose of 3d lattice chiral fermions are of the more standard 0-form albeit their UV completion has a few of highly surprising features (See below).

Furthermore, such emergent anomalous symmetries can also appear in a stable high dimensional gapless phase discussed in Ref.\cite{Yang_Zhou,Kapoor,Zhou2}.
Topological stability of those gapless states are much related to nodal 
point phases previously proposed in the contexts of HTc\cite{Balents1998}, spin liquids\cite{Kitaev06,Wen2002}, symmetry protected gapless states
\cite{Beri2010,Kobayashi2014,Schnyder2011,Matsuura2013,Armitage2018}.

It is therefore quite natural to discuss potential applications of such unique emergent symmetries in gapless states, either as tQCPs or as stable gapless phases. Our main objective here is to explore in details one such application, towards an emergent single Weyl fermion in 3d lattices (either left-handed or right-handed) or more precisely lattice chiral fermions via a family of topological gapless states which can be associated to a class of gapped fermonic topological states.

Single Weyl cone dynamics have naturally appeared in quite a few studies of $(3+1)$-dimensional $Z^T_2$-symmetry protected topological states, either as a representation of infrared dynamics at topological quantum critical points (tQCPs)\cite{Zhou2,Zhou3} or as an effective field theory (EFT) of stable nodal point states as a well-known example of gapless topological phases \cite{Yang_Zhou,Kapoor,Zhou1}. In Ref.\cite{Kapoor}, a lattice model was further constructed to verify 
the paradigm of emergent single Weyl fermion by scanning the whole 3-torus of the momentum space.
As a pleasant surprise, the appearance of such dynamics does not contradict the well-known no-go theorem of single Weyl cones in a three-dimensional (spatial) lattice \cite{NIELSEN198120,NIELSEN1981173,Friedan:1982nk,Kiritsis:1986re}, making it possible to develop a lattice model where the ultraviolet (UV) completion of such an IR Weyl cone can be further explored and understood with fine resolutions.

One of the interesting applications of this observation can be the potential emergence of a single copy of supersymmetric (SUSY) conformal field theory (CFT) in a three-dimensional (or two dimensional) bulk with $N_f=\frac{1}{2}$ fermions or one-half of a three dimensional Dirac fermions\cite{Zhou2}. The supersymmetri counterpart of two majorana fermion cones here can form a complex fermion under certain conditions, resulting in an emergent $U(1)$ symmetry.

Two copies of decoupled SUSY CFTs were originally suggested in a two-dimensional lattice\cite{Lee2007}. They were also later suggested to appear in two-dimensional surfaces of three dimensional bulks, where each surface supports one single copy of SUSY CFT\cite{Grover2014,Jian2017,Zerf2016,Fei2016}.

As implied above, in this article we will focus on the UV completion of 
emergent single Weyl fermions, especially the UV completion of emergent symmetries appearing in various gapless topological superconductors and superfluids known to us before\cite{Yang_Zhou,Zhou1,Zhou2,Zhou3,Kapoor}.
The no-go theorem implies that the non-compactness of symmetries can play a very critical role in chiral fermion lattice models\cite{Friedan:1982nk}. The incompatibility between the locality and compactness of symmetry charges has been emphasized as well\cite{Fidkowski23}. 
Very recently this has been further explicitly realized in 1d\cite{Chatterjee_2025} as well as in
in 3d chiral fermion lattice models\cite{gioia2025exactchiralsymmetries31d}.
It has been emphasized in those research that the symmetries can be both non-on-site and non-compact in a few concrete lattice chiral fermion models. 

Note that all $3d$ Weyl semi-metals with the charge $U(1)$ symmetry have to have a pair or pairs of Weyl cones\cite{Burkov2018,Armitage2018}. That is fully consistent with the standard fermion doubling theorem\cite{NIELSEN1981173,NIELSEN198120}. We will be interested in a family of single Weyl fermion lattice models which evade the no-go theorem via breaking the charge $U(1)$ symmetry. As elaborated below, physically they all describe gapless superfluids or superconductors in the infrared limit and therefore naturally form an equivalence class.

A few fascinating questions about the UV completion of the IR theories at tQCPs or nodal point phases remain to be answered:

A) When constructing UV completed single chiral fermion model via tQCPs or gapless SPTs, What are the sufficient and necessary conditions to evade the fermion no-go theorem toward single Weyl fermions? Can we establish one or both of these general conditions?

B) What are the relations between the single Weyl cone infrared dynamics discussed in the context of gapless superfluids such as tQCPs or gapless topological phases\cite{Zhou1,Zhou2,Zhou3} and chiral lattice models discussed in the context of lattice chiral fermions?
Are they entirely different approaches or there is an intimate connection between two trains of thoughts?

C) If there are close connections between appearance of single Weyl cones in the context of tQCPs or magnetically polarized SPTs, and the lattice fermion models suited for the purpose of UV completion of chiral fermions, what are they?

In this article, we will make attempts to address these important questions hoping to at least improve our current views about this exciting topic.
Specifically, we will elaborate on the connections between two trains of thoughts pursued independently by different communities: emergent symmetries at topological quantum critical points or in nodal point phases in the context of gapless topological superconductors Vs lattice chiral fermions recently discussed.
And we will mainly investigate 3d lattices.

Below are a list of major findings on the questions raised above.

1) Breaking the charge $U(1)$ symmetry is one of the \textit{sufficient conditions} needed to evade the fermion no-go theorem. It can naturally lead to a single Weyl cone in topological matter. Although the original Nielsen-Ninomiya no-go theorem itself does not seem to indicate explicitly a lattice model needs to further have a non-on site symmetry, the more general arguments of t'Hooft anomalies \cite{tHooft1976,Adler1969,Bell1969} applied to emergent single Weyl cones do naturally imply the UV completion of the single-Weyl-cone low energy physics in a lattice model has to be a non-on-site symmetry. 
Studies in Ref.\cite{Fidkowski23} further suggest non-compactness.
Whether breaking the charge $U(1)$ symmetry is also one of necessary conditions remains to be further investigated. At the time of writing, we believe that seems to be likely to be true as we are not aware of other explicit 3d lattice chiral fermion models where the charge $U(1)$ symmetry is unbroken.

It is worth mentioning that in $(1+1)D$ a Chiral fermion lattice model can be constructed (via a gapless spinless fermion) with charge $U_V(1)$ symmetry unbroken\cite{Chatterjee_2025}.
In addition, the $U_A(1)$ is either non-compact or if compact, has an anomaly with $U_A(1)$ that is fundamentally non-abelian, i.e. is defined by an Onsager Algebra\cite{Chatterjee_2025,Fendley19}. (See also recent attempts in $(3+1)D$ lattices\cite{Catterall25}.)
Here we will focus exclusively on single Weyl fermion in 3d lattice models.

2) There is an intimate connection between our discussions on gapless superfluids or superconductors and emergent single Weyl cones there, and the recent 3d lattice model construction of chiral fermions. In all known studies, the charge $U(1)$ symmetry has to be broken spontaneously, either explicitly as in the discussions of gapless superfluids or in a more delicate way as in the lattice fermion construction.

All the 3d single Weyl fermion lattice models constructed so far that are known to us can be mapped, via $Spin(4)$ unitary transformations, into a $p$-wave superfluid or superconductor. In the infrared limit, they shall be isomorphic to either a gapless superconducting tQCP with the time reversal symmetry or its dual, a superconducting nodal point phase which breaks the time reversal symmetry. Therefore, these models form an equivalence class (see below) and belong to the same family. 

3) All the 3d lattice models discussed so far, if they result in single Weyl cone quantum dynamics form an equivalence class of Hamiltonians that are related by Spin(4) unitary transformations. 
The whole family of lattice Hamiltonians can be further encoded in two dual copies of $3\otimes 3$ dimensional $(1,1)$ representations of a $Spin(4)$ group. And one of the $SU(2)$ subgroups in the $Spin(4)$ group can be identified as a subgroup of emergent Lorentz $SO(3,1)$ group. In addition, the two copies of these theories are further connected by a $\sigma-\tau$ duality transformation previously introduced in Ref.\cite{Zhou1}.

The rest of our article is organized as follows. In Sect. \ref{part: realFermion}, we introduce the Nambu representation for discussions on states with charge U(1) symmetry breaking. As the Nambu representation which includes both charge $Q = \pm e$ sectors (each has spin one-half) has redundant complex fermion bands, we carry out all discussions in an equivalent real fermion representation which can be obtained by a simple unitary transformation with the degrees of freedom preserved. These real fermions form a fundamental representation of a Spin(4) group. 
Furthermore, they are always charge conjugation symmetric under the charge conjugation transformation $\cal{C}$. That is, all the real fermion theories are intrinsically charge conjugated, while for complex fermions, the charge conjugation symmetry, if relevant, needs to be further imposed extrinsically.

In this section, we will introduce the two typical paths, Path \textbf{a}) and \textbf{b}), to lattice fermion models both involve gapped fermionic topological symmetry protected states (SPTs) with the time reversal symmetry to start with. In all the discussions below, we only deal with models where the charge $U(1)$ symmetry is broken spontaneously and quantum states or phases in our discussions physically are always gapless superfluids or superconductors. There are either quantum critical points between two topologically distinct SPTs or magnetically strongly polarized SPTs. 

 Path {\bf a)}:  This first path preserves the time reversal symmetry $T$. The single Weyl cone appears when we push our gapped lattice models with protecting time-reversal symmetry or $T$-symmetry to a gapless topological quantum critical point (tQCP). The infrared effective field theory in this limit and emergent symmetries were quite extensively studied before in Refs.\cite{Zhou1,Zhou2,Zhou3}. In this article, we will exclusively focus on lattice models where one can explicitly visualize ultraviolet (UV) completions of the emergent Weyl cone dynamics and UV completed symmetry groups.

Path {\bf b)}: This second path breaks the time reversal symmetry explicitly. We peel oﬀ excessive degrees of freedom in a gapped SPT through applying T-symmetry breaking fields to the SPT which naturally result in a pair of gapless real fermion nodal points. A lattice model was previously constructed in ref.\cite{Kapoor} to explicitly verify the existence of such a phase where only a pair of real fermion nodal points appear.
That study was to verify that only single Weyl cone emerges in the low energy sector near a pair nodal points of real fermion bands\cite{Yang_Zhou} in the whole 3-torus
$\mathbb{T}^3$ of the crystal momenta.

In this article, when examining these main paths, we will mainly explore the UV completion of the emergent symmetries associated with the single Weyl cone physics in the infrared limit. We carry out all our discussions via lattice models where the whole momentum space of 3-torus $\mathbb{T}^3$ can be tracked explicitly.

In Sect.\ref{part: NN-no-go-theorem}, we revisit the fermion no-go theorem on band crossings, but carry out our examination on real fermions rather than the usual complex fermions so to apply directly to gapless superconductors or superfluids. Instead of using topological homotopy analyses and investigating the topological curvature flux emitted from crossings, here we employ an alternative differential-geometry-based approach. 

We introduce a six dimensional manifold that consists of two three-dimensional oriented submanifolds. One of such three-dimensional submanifolds is spanned by the Hamiltonian fibers, $H(\bf p)$ which are momentum ${\bf p}$ dependent, and the other three dimensional submanifold is $\mathbb{T}^3$ for the three dimensional lattice momenta as a base space. Whenever the two oriented submanifolds intersect, we show that two real fermion bands cross each with a specific handedness. We then identify that the numbers of Weyl cones, left or right, can be directly mapped into the numbers of intersections between two oriented three-dimensional manifolds. 

We show that for real fermions, the intersections always appear in pairs at $\pm {\bf p}_i$,$i = 1,2,3...,M$. The even-integer numbers of interactions follows an intersection theorem in differential geometry.
The momentum space structure of these intersections, i.e. intersections always appear in pairs at $\pm {\bf p}$,
is a result of the intrinsic charge conjugation symmetry of real fermions present in any charge $U(1)$ symmetry breaking state. 

These intersections therefore indicate $2M$ real fermion band crossings with $M = 1,2,3,.....$. The case of $M = 1$ has one pair of real fermion band crossings at $-\mathbf{p}$ and $\mathbf{p}$ respectively. Because of the charge conjugation symmetry, it can be exactly mapped into an emergent Weyl fermion cone via a straightforward explicit reconstruction.
In addition, $M=3,5,..$ lead to odd numbers of Weyl fermions.

In the real fermion representation, gapless 3D Dirac fermion cones with both left and right Weyl fermion cones would appear as the limit where $M=2,4,6,...$ i.e. there are even numbers of pairs of band crossings. This is the manifestation of the standard no-go theorem of complex fermions with the usual charge $U(1)$ symmetry\cite{NIELSEN198120,NIELSEN1981173}, but in the real fermion representation.

In Sect.\ref{part: Model_analysis}, we analyze a few lattice chiral fermion models through the lense of the real fermion framework introduced in the previous sections. In this part, we focus on a model along Path \textbf{a}) 
which is time reversal symmetric and and Path \textbf{b}) which breaks the time reversal symmetry explicitly.
We also compare the non-on-site symmetries in different lattice models.

Model I represents a lattice model of a generic tQCP with protecting time-reversal $T$-symmetry that involved a minimum change of topologies with $\delta N^f_w=2$. Here $\delta N_w$ is the change of topological invariants across a tQCP and the superscript \textit{f} further refers to the fundamental value. It is a tQCP in the well-known DIII class of topological superconductors with all the UV completed details in 3-torus $\mathbb{T}^3$.

Model II represents a lattice model of a nodal point phase induced by strong magnetic field acting on a fully gapped topological superconductor in a DIII class with protecting T-reversal symmetry (but again with charge $U(1)$ symmetry spontaneously broken).

In Sect.\ref{part: Model_analysisII}, we will focus on three lattice models along Path \textbf{c}) which is a hybrid of Path \textbf{a}) and Path \textbf{b}) that involves both tQCPs and time reversal symmetry breaking fields.
We start with the model III, a tQCP with a minimum change of topologies $\delta N_w=2$ but further subject to a time reversal symmetry breaking magnetic field.
The magnetic field lifts the degeneracy of Kramer doublets and results in two real fermion band crossing points at $\pm {\bf p}_0$.

Model IV is a multicritical tQCP that leads to a change of topologies twice the fundamental value, i.e. $\delta N_w=4$. 
Model V is the closely related to the one introduced in ref.\cite{gioia2025exactchiralsymmetries31d} where $\delta N_w=8$, quadrupling of the fundamental unit of $\delta N^f_w=2$.

In both Sect.\ref{part: Model_analysis}, \ref{part: Model_analysisII}, we also illustrate the UV completion of the emergent symmetries in the infrared limit and construct explicitly the symmetry charge operators in torus-three $\mathbb{T}^3$. We further briefly discuss different phases 
appearing in these lattices models and identify the phase boundaries for the phases with an emergent single Weyl fermion.

In Sect.\ref{part: family},
We then further illustrate how all different models discussed in the previous two sections and known to us can be organized into a distinct linear representation of $Spin(4)$. The infrared (IR) limit of all the lattice models discussed so far, if they lead to single Weyl cone dynamics, shall always be a part of a small family of Hamiltonians that can be related to each other via Spin(4) transformations. Both the lattice Hamiltonians along with their infrared limits form an equivalence class.

Here specifically, we further show this entire family of lattice Hamiltonians can be encoded in two copies of $3\otimes3$ dimensional $(1,1)$ representations of Spin(4) group where one of the $SU(2)$ subgroups can be identified as a subgroup of emergent Lorentz SO(3,1) group. These Hamiltonians span two dual linear spaces of $Spin(4)$ group. In addition, these two copies of the representations or linear spaces are further connected by a $\sigma-\tau$ duality transformation previously introduced in Ref.\cite{Zhou1}.

In Sect. \ref{part: Charge}, we discuss the construction of the non-compact symmetry group along different paths, path ${\bf a})$ and path ${\bf b}, {\bf c}$. The non-compact nature of symmetry charges was previously emphasized in Ref.\cite{Friedan:1982nk,Fidkowski23,Chatterjee_2025,gioia2025exactchiralsymmetries31d}.
We illustrate that along path \textbf{a}) where the gapless tQCP states are time-reversal invariant, the linear space spanned by the conserved symmetry charge operators is a six-dimensional manifold. Along path \textbf{b}) and path \textbf{c}) where the magnetic peeling is applied and the time reversal symmetry is broken, the dimension of the linear space of the conserved charge operators is two-dimensional. We discuss the implications of these results.

In Sect.\ref{part: Conclusion}, we conclude our studies and discuss a few
open questions on this topic.

\section{The systematic real fermion approach}
\label{part: realFermion}

In this section we lay the technical foundation of our approach by reformulating lattice fermion models in terms of real (Majorana) fermions. This reformulation is essential for treating systems with broken U(1) charge symmetry—a necessary ingredient in our construction. We show the equivalence of the BdG and real representations, and then use this formalism to introduce two distinct paths (time-reversal symmetric and time-reversal breaking) toward constructing single Weyl cones on the lattice.

\tdplotsetmaincoords{65}{115} 

\begin{figure}[h]
    \centering
    \includegraphics[width=1.\linewidth]{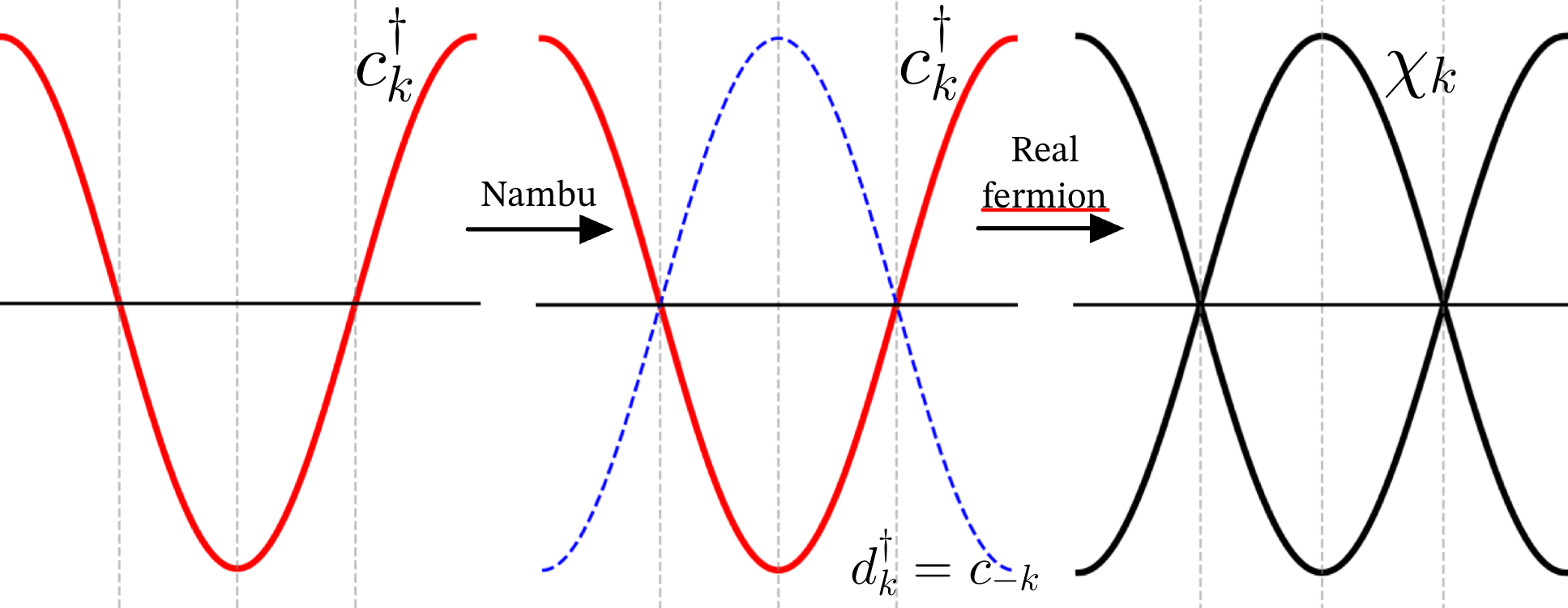}
    \begin{tikzpicture}[tdplot_main_coords, line join=round]

  \draw[->, thick] (0,0,0) -- (3,0,0) node[anchor=north east]{$-\tau_{y}$};
  \draw[->, thick] (0,0,0) -- (0,3,0) node[anchor=north west]{$\tau_{x}$};
  \draw[->, thick] (0,0,0) -- (0,0,3) node[anchor=south]{$\tau_{z}$};
  \draw[->,  black, tdplot_screen_coords] 
  (1.5,0) arc[start angle=-15, end angle=85, radius=1.5cm] ;
  \draw[->,  black, tdplot_screen_coords] 
  (-0.2,1.83) arc[start angle=110, end angle=215, radius=1.6cm] ;
  \draw[->, black, tdplot_screen_coords] 
  (-0.53,-0.86) arc[start angle=-110, end angle=-50, radius=1.7cm] ;

\end{tikzpicture}
    \caption{Top row from the left to right: shown on the left is one complex fermion degree of freedom at each k associated with the creation operator $c_{\bf k}^\dagger$. In the middle figure one further redefines the fermion creation operator as an annihilation operator for a hole-like fermion so that $c_{\bf k}^\dagger = d_{-{\bf k}}$. In the right figure, one rotates to the real fermion basis. The red line is for charge $+$e and the blue dotted line is for charge $-$e.
    The bottom figure represents the transformation from the Nambu space to a real fermion representation.}
    \label{fig:Nambu}
\end{figure}

\subsection{Real fermion representation}

One way to avoid the Nielsen-Ninomiya no-go theorem \ref{thm: NN} is to break the charge U(1) symmetry, the exact conservation of the U(1) charge being the main assumption of the theorem, and the natural first step towards doing so is to consider real fermions. 

Real fermions are more fundamental than complex fermions: any n-component complex fermions can be constructed out of 2n-component real fermions by imposing the U(1) symmetry. Consider:
\begin{equation}
    \chi = (\chi_1,...,\chi_{2n})^T, \chi_i^\dagger=\chi_i, i=1,..n,n+1,....2n
\end{equation} as real fermions satisfying the following anti-commuting relations 

\begin{equation}
    \{\chi_i , \chi_j\} = \delta_{i,j} ,\quad i,j = 1,...,2n.
\end{equation}

These real fermions then can also describe the $n$ complex fermions by the following constructions
\begin{equation}
\psi = \frac{1}{\sqrt{2}}(\chi_1+i\chi_{n+1},...,\chi_{n}+i\chi_{2n})^T.
\end{equation}
One can easily verify the anti-commuting relations between these fermion fields.

These complex fermions then can carry the standard conserved $U(1)$ charges if and only if the Hamiltonian is further invariant under the following $SO(2) \cong U(1)$ transformation,

\begin{equation}
    R(\theta) = \begin{pmatrix}
    \cos(\theta/2)I_n&-\sin(\theta/2)I_n\\
    \sin(\theta/2)I_n&\cos(\theta/2)I_n
\end{pmatrix},
\end{equation} 
where $\theta \in [0,4\pi]$, and $I_n$ is an $n\times n$ identity matrix. 

\subsection{Nambu Representation and Real Fermions}

For charge $U(1)$ symmetry breaking states, we typically use the Nambu representation. Below we illustrate that Nambu representation can be easily rotated into a real fermion representation, a result that had been utilized in many previous studies of topological superfluids and superconductors\cite{Yang_Zhou, Zhou1}. 

Consider a general case where we have a 2 band Hamiltonian for complex fermion with a U(1)-symmetry-breaking term (a factor 2 was added to simplify notations): 

\begin{equation}
    \mathcal{H}_0 =\int_{BZ} d\mathbf{k}\left[c_k^\dagger H_0(\mathbf{k})c_k + \frac{1}{2}(c_{-k}^T \Delta(\mathbf{k})c_{k} + H.c.)\right]
\end{equation} 
Where $c_k= (
    c_{k \uparrow} , c_{k \downarrow})^T$
represents spin-1/2 complex fermions. $H_0(\mathbf{k})$ is a standard free particle Hamiltonian.
And $\Delta(\mathbf{k})$ are $2\times2$ complex matrices; the matrix elements $\Delta_{s,s'}({\bf p})$ are subject to the symmetry constraints of 
$\Delta_{s,s'}({\bf p}) =- \Delta_{s',s}(-{\bf p})$, with $s,s'=\uparrow,\downarrow$ as spin indices.

We can rewrite it in the following way (we omit the spin indices):

\begin{widetext}
\begin{equation}
    \mathcal{H}_0 
    =  \frac{1}{2}\int_{BZ} d\mathbf{k} [\,c_k^\dagger H_0(\mathbf{k})c_k - c_{-k}^TH_0^*(-\mathbf{k})(c_{-k}^\dagger)^T + c_{-k}^T \Delta(\mathbf{k})c_{k}-c_{k}^\dagger \Delta^*(-\mathbf{k})(c_{-k}^{\dagger})^T\,]
\end{equation}
\end{widetext}
with $\Delta^\dagger(\mathbf{k}) = -\Delta^*(-\mathbf{k})$. So we have an alternative way of writing the Hamiltonian, using the basis $(c_{k\uparrow},c_{k\downarrow},c_{-k\uparrow}^\dagger,c_{-k\downarrow}^\dagger)$, which is known as the Bogoliubov–de Gennes formalism:

\begin{equation}
\label{eq:BdG_H}
    H_{\text{BdG}}(\mathbf{k}) = 
 \frac{1}{2}\begin{pmatrix}
    H_0(\mathbf{k}) &  -\Delta^*(-\mathbf{k})\\
    \Delta(\mathbf{k}) & -H_0^*(\mathbf{-k})
\end{pmatrix}
\end{equation}

As is obvious from the formulation (Eq.\ref{eq:BdG_H}), $H_{\text{BdG}}$ has one half of the number of degrees of freedom as a $4$-component complex fermion Hamiltonian. Thus it is natural to reformulate it in terms of 4-band real (majorana) fermions as illustrated in Fig.\ref{fig:Nambu}. 

The real fermion operators are defined as follows:
\begin{equation}
\begin{aligned}
    \chi_{+,s}(x) &= \frac{1}{\sqrt{2}} (c_{s}(x) + c^{\dagger}_{s}(x)),\\
    \chi_{-,s}(x) &= \frac{1}{i\sqrt{2}} (c_{s}(x) - c^{\dagger}_{s}(x))
\end{aligned}
\end{equation}
$s = \uparrow,\downarrow$ representing the spin indices.

The 4-band real fermions are defined as
\begin{equation}
    \chi(x) = \begin{pmatrix} \chi_{+,\uparrow}(x) & \chi_{+,\downarrow}(x)&\chi_{-,\uparrow}(x)&\chi_{-,\downarrow}(x) \end{pmatrix}^T
\end{equation}
so that $\chi^{\dagger}(x) = \chi^T (x)$.
So the change of basis writes:

\begin{equation}
\label{eq: trans_majorana}
    \begin{pmatrix} c_\uparrow(x)\\  c_\downarrow(x) \\ c_\uparrow^{\dagger}(x) \\ c_\downarrow^{\dagger}(x) \end{pmatrix} =  \frac{1}{\sqrt{2}} \begin{pmatrix} I_2 & iI_2 \\ I_2 & -iI_2 \end{pmatrix}  \chi(x)
\end{equation}
where $I_2$ acts on the spin subspace. 

This transformation is equivalent to the following $SU(2)$ unitary transformation in the Nambu space (up to a global $U(1)$ phase factor):

\begin{equation}
\label{eq:R_trans}
    R = \frac{I}{2} - \frac{i}{2}(\tau^x -\tau^y +\tau^z)
\end{equation}
which acts as:

\begin{equation}
    \tau^x \rightarrow \tau^z, \quad \tau^y \rightarrow -\tau^x, \quad \tau^z \rightarrow -\tau^y
\end{equation}

In what follows, we will exclusively work in the real fermion basis.
After this unitary transformation, (Eq.\ref{eq:BdG_H}) can be rewritten as:

\begin{widetext}
    \begin{equation}
    \label{eq: nambu after R}
        H_{\text{Rotated}}(\mathbf{k}) = \frac{1}{2}\left[\left(\mathbb{I}_2\otimes\frac{H_0(\mathbf{k})-H_0^*(-\mathbf{k})}{2}  -  \tau^y\otimes\frac{H_0(\mathbf{k})+H_0^*(-\mathbf{k})}{2} \right)+ \tau^z\otimes\frac{\Delta(\mathbf{k})-\Delta^{*}(-\mathbf{k})}{2} - \tau^x\otimes\frac{\Delta(\mathbf{k})+\Delta^*(-\mathbf{k})}{2i}\right]
    \end{equation}
\end{widetext}

Let us now show that every real fermion Hamiltonian can be written in the form (Eq.\ref{eq: nambu after R}).

Let $H_{\text{Real}}$ be a real fermion Hamiltonian. The real fermions satisfy the charge conjugation symmetry: 
\begin{align}
    \chi^\dagger(x) &= \chi^T(x),\\
    \label{eq_realF}
    \Rightarrow \chi^\dagger(-\mathbf{k}) &= \chi^T(\mathbf{k})
\end{align}
which implies that $H_{\text{Real}}$ must follow:

\begin{equation}
    H_{\text{Real}}^*(x) + H_{\text{Real}}(x) = 0,
\end{equation}
\begin{equation}
\label{eq:H_real}
    \Rightarrow H_{\text{Real}}(\mathbf{k}) = -H_{\text{Real}}(-\mathbf{k})^*
\end{equation}
For the purpose of our demonstration, we decompose $H$ in the following way:
\begin{equation}
    H_{\text{Real}}(\mathbf{k})=\frac{1}{2}\left[\mathbb{I}_2\otimes A  -  \tau^y\otimes B+ \tau^z\otimes C +i \tau^x\otimes D\right]
\end{equation}
with $A$ and $C$ being $2\times2$ hermitian matrices and $B$ and $D$ being $2\times2$ anti-hermitian matrices.

The real Hamiltonian constraint (Eq.\ref{eq:H_real}) then translates as:
\begin{subequations}
\begin{align}
    A(\mathbf{k})&=-A(-\mathbf{k})^*\\
    C(\mathbf{k})&=-C(-\mathbf{k})^*\\
    B(\mathbf{k})&=B(-\mathbf{k})^*\\
    D(\mathbf{k})&=D(-\mathbf{k})^*
\end{align}
\end{subequations}
Thus if we set $H_0 = A+B$ and $\Delta = C+D$, then:
\begin{subequations}
\begin{align}
    A(\mathbf{k})&=\frac{H_0(\mathbf{k})-H_0^*(-\mathbf{k})}{2} \\
    B(\mathbf{k})&=\frac{H_0(\mathbf{k})+H_0^*(-\mathbf{k})}{2} \\
    C(\mathbf{k})&=\frac{\Delta(\mathbf{k})-\Delta^*(-\mathbf{k})}{2} \\
D(\mathbf{k})&=\frac{\Delta(\mathbf{k})+\Delta^*(-\mathbf{k})}{2}
\end{align}
\end{subequations}
gives back the form (Eq.\ref{eq: nambu after R}) for $H_{\text{Real}}$. 

Therefore we have shown that the Nambu representation is equivalent to the Real representation, further justifying the use of the real representation in our analyses.

Furthermore, real fermions have a simpler charge conjugation symmetry relation compared to the BdG formalism;
in the BdG formalism, particle hole symmetry writes:

\begin{equation}
    \tau^x H_{\text{BdG}}(\mathbf{k})\tau^x = - H_{\text{BdG}}^*(-\mathbf{k})
\end{equation}
whereas in the real fermion basis it comes down to just (Eq.\ref{eq:H_real}), as one can verify that $R\tau^xR^T = iI$.

\subsection{Lattice models of real fermions}

\label{part: 2_general_single_weyl}

We will study in details four concrete 3D lattice models in section \ref{part: Model_analysis} that can be thought of as some particular cases of the general model that we construct below. Starting with a gapped fermionic topological symmetry protected state (SPT) with the T-symmetry, i.e. a {\em DIII} class topological superconductor, we can pursue along the two distinct paths {\bf a)} and {\bf b)} towards the construction of lattice chiral fermions.

Let us emphasize again here that physically all the 3D lattice model under our considerations lead to gapless superconductors or superfluids where the conventional charge $U(1)$ symmetry is broken spontaneously. They all lead to an emergent single Weyl cone physics in the infrared limit. However, our main interest in this article is about the UV completion of infrared Weyl 
fermions in the momentum space of $\mathbb{T}^3$. 

\begin{figure}[h]
    \centering
    \includegraphics[width=0.9\linewidth]{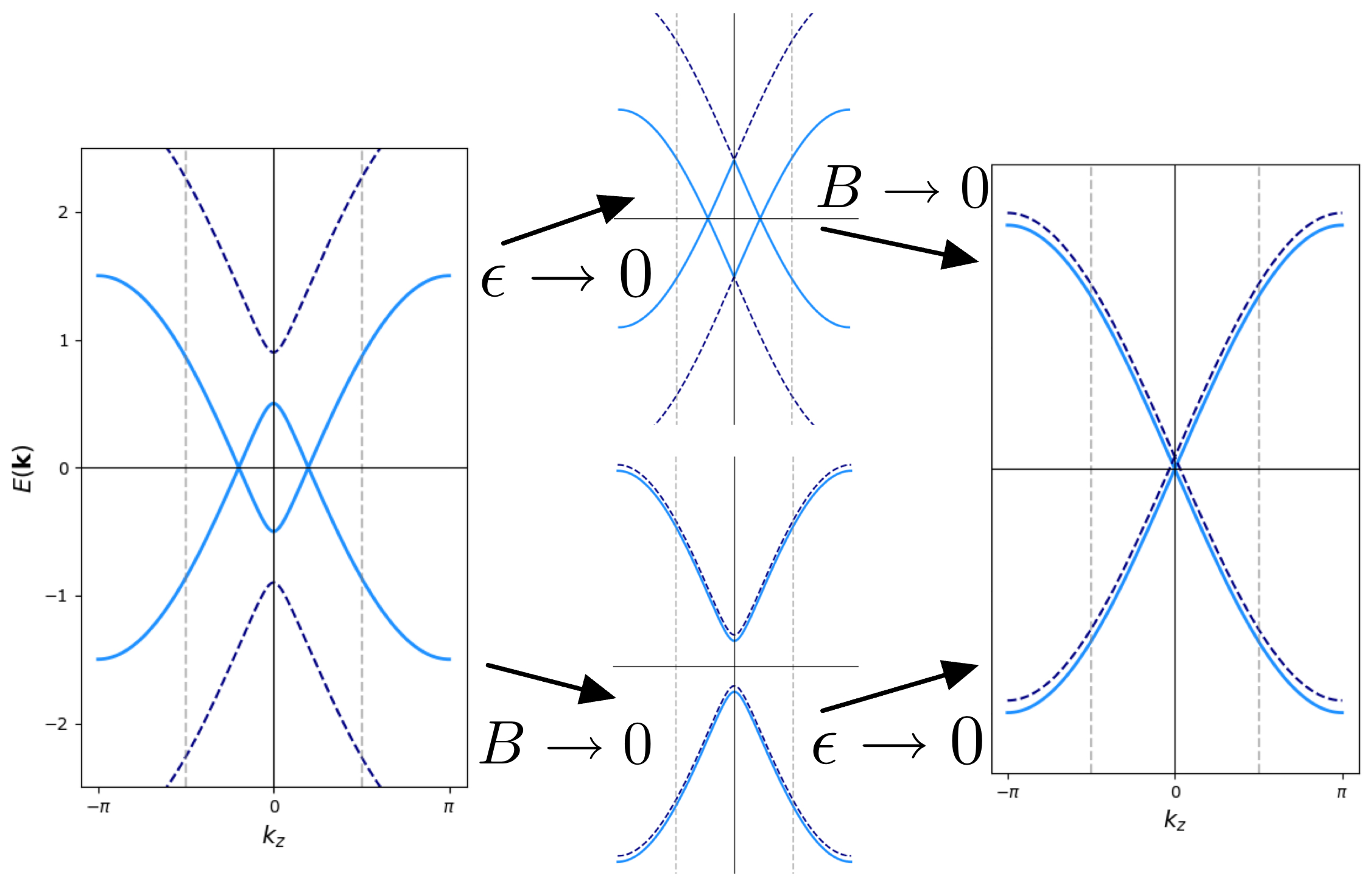}
    \caption{The $T$-invariant tQCP (Right) as a extreme case of the nodal phase (Left). A $T$-invariant tQCP can be seen as a fine tuned phase of the Nodal phase, either by tuning the coupling parameter $\epsilon$ to 0 and then the T-breaking magnetic field (up path), or by first tuning B to 0, coming back to the gapped lattice model with T-symmetry, and then pushing it to a gapless phase.}

\label{fig:Nodal_vs_tQCP2}
\end{figure}

Path {\bf a)}: We push the gapped state to be a quantum critical one while preserving the time reversal symmetry. Especially, we close the gap of our SPT at one of the 8 corners of the Brillouin zone say at at $\mathbf{k} = 0$, while maintaining a finite gap everywhere else. The state obtained in this way is a Time reversal invariant three dimensional tQCP.

Path {\bf b)}: we apply a $T$-symmetry breaking field to the fully gapped three dimensional SPT which naturally results in a pair of gapless nodal points. This pair can be further applied to reconstruct a single Weyl cone (Fig\ref{fig:Nodal_vs_tQCP2}).

\subsubsection{Path {\bf a)}}

Following our discussion on real fermions, we will work with a general 4-band Hamiltonian in the real fermion representation:

\begin{equation}
    \mathcal{H}=\sum_{\bf k} \sum _i \chi^T ({\bf k}) \,A_i(\mathbf{k})\Gamma_{i}\, \chi(-{\bf k}), \quad \Gamma_i=\Gamma^\dagger_i, \Gamma_i^2=I
\end{equation}
where $\Gamma_i$ are hermitian matrices.

Contrary to Clifford algebras where only anticommuting matrices are involved, here $\Gamma_{i}$ and $\Gamma_{j}$ can either commute or anti-commute with each other to account for our discussion of path b) where we need commuting matrices.

However, because $\Gamma_i$ are coupled to real fermion fields, there are stringent constraints due to the real fermion anti-commuting algebras unlike in the complex fermion representation. That is, using condition (Eq.\ref{eq:H_real}), we write
\begin{equation}
    A_i(\mathbf{k})\Gamma_i = A_i^Q(\mathbf{k})\Gamma_i^S+A_i^A(\mathbf{k})\Gamma_i^A
\end{equation}
Where $\Gamma^S$ is the symmetric part of $\Gamma$ and $\Gamma^A$ the antisymmetric part.
Then the constraint writes:

\begin{align}
    A_i^Q(\mathbf{k})&=-A_i^Q(\mathbf{-k})\\
    A_i^A(\mathbf{k})&=A_i^A(\mathbf{-k})
\end{align}

Let us choose a particular realization of the coefficients $A_i({\bf k})$:

\begin{eqnarray}
    H(\mathbf{k}) = \sin(k_x)\Gamma_1 + \sin(k_y)\Gamma_2 + \sin(k_z)\Gamma_3+M({\bf k})\Gamma_4; \nonumber \\
    \label{eq: tQCP0}
\end{eqnarray}

$\Gamma_{i}$, $i=1,2,3,4$ follow the standard Clifford algebra. And they transform under the time reversal transformation $T$ as

\begin{eqnarray}
 && \{ \Gamma_i, \Gamma_j \}=2\delta_{ij}, \mathcal{T}^2=-1;\nonumber \\
&& \mathcal{T}^{-1} \Gamma_{i} \mathcal{T}=-\Gamma_i, i=1,2,3;
\mathcal{T}^{-1} \Gamma_{4} \mathcal{T}=\Gamma_4.
\label{eq:CAlgebra}
\end{eqnarray}

The coefficient $M({\bf k})$ is a T-symmetry preserving coupling, so the bands are Kramer degenerate everywhere. It can be seen as an effective model of a p-wave superconductor in the limit of strong coupling. As long as this coefficient is non-zero at all of the 8 points $k_x, k_y,k_z = 0,\pi$, the superconducting SPT state remains gapped. We can choose:

\begin{equation}
    M({\bf k}) = \mu-\sum \cos(k_i),\mu=3+\epsilon
    \label{eq: mass}
\end{equation}
where $\mu$ is a mass parameter. In DIII class superconducting SPTs, $\mu$ can be related to chemical potentials.

From algebraic considerations, we obtain the following spectrum:

\begin{equation}
E_\pm^2 = \sin(k_x)^2+\sin(k_y)^2+\sin(k_z)^2+(3+\epsilon-\sum \cos(k_i))^2
\end{equation}

\begin{figure}[h]
    \centering
    \includegraphics[width=0.9\linewidth]{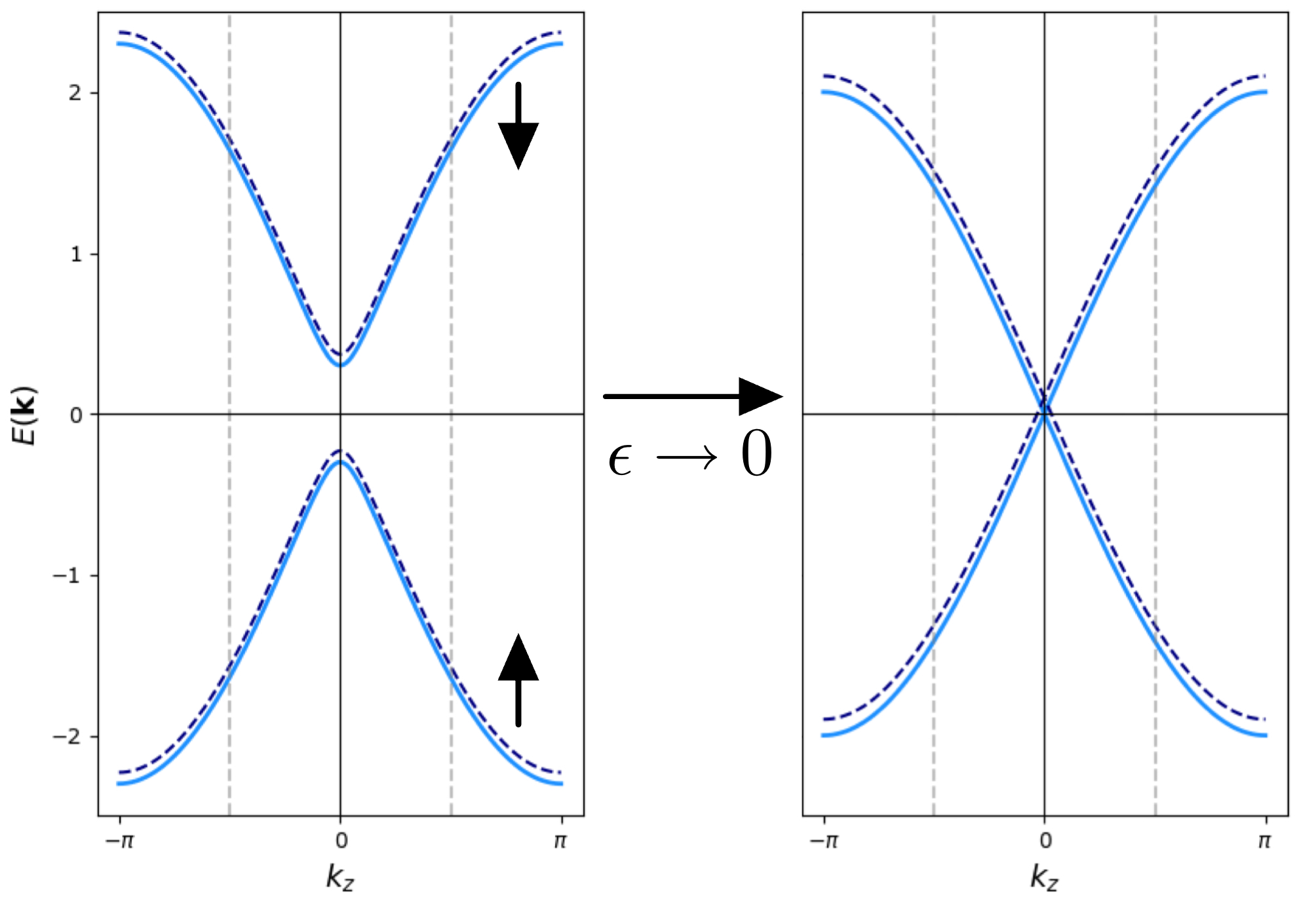}
    \caption{A T-symmetric tQCP in SPTs when the mass parameter is tuned to be zero, i.e. $\epsilon = 0$ (See Eq.\ref{eq: tQCP0},\ref{eq: mass}).}

\label{fig:tQCP}
\end{figure}

In the lattice model Eq.\ref{eq: tQCP0},\ref{eq: mass},
$\epsilon>0$ corresponds to a trivial state, and $\epsilon<0$ to a topological state. The gap closes for $\epsilon = 0$ at $\mathbf{k} = 0$ only, therefore realizing a tQCP that has the time-reversal symmetry (FIG. \ref{fig:tQCP}).

\subsubsection{Path {\bf b)}}

We start with the previous model in a gapped phase, $\epsilon\neq 0$, and we add a T-breaking field (e.g. a magnetic field along the z-direction). We replaced $3+\epsilon$ by $\mu$ as $\epsilon$ is no longer supposed to go to zero. For the purpose of the current study, we consider the magnetic field $B$ to be k-independent:

\begin{equation}
\label{eq: NodalH}
\begin{aligned}
     H(\mathbf{k}) =  &\sin(k_x)\Gamma_1 + \sin(k_y)\Gamma_2 + \sin(k_z)\Gamma_3 \\&+M({\bf k})\Gamma_4 + B \Gamma_5
\end{aligned}
\end{equation}
where the mass $M({\bf k})$ has been introduced in Eq.\ref{eq: mass}.

And following the algebras in Eq.\ref{eq:CAlgebra}. one can further set 

\begin{eqnarray}
     && \{\Gamma_5, \Gamma_1 \} =\{ \Gamma_5, \Gamma_2 \}=0; \nonumber \\
      && [ \Gamma_5, \Gamma_3 ] =[ \Gamma_5, \Gamma_4 ]=0.
\end{eqnarray}
That is $\Gamma_5$
commutes with $\Gamma_3$ and $\Gamma_4$ and anti-commutes with $\Gamma_1$ and $\Gamma_2$. Without losing generality, one can choose
an antisymmetric hermitian operator $\Gamma_5$ of the following form, 

\begin{equation}
    \Gamma_5=i \Gamma_1\Gamma_2, \Gamma_5^T=-\Gamma_5, \mathcal{T}^{-1} \Gamma_5 \mathcal{T}=-\Gamma_5.
\end{equation}
Note it does break the time reversal $T$-symmetry as desired.

\begin{figure}[h]
    \centering
    \includegraphics[width=1.\linewidth]{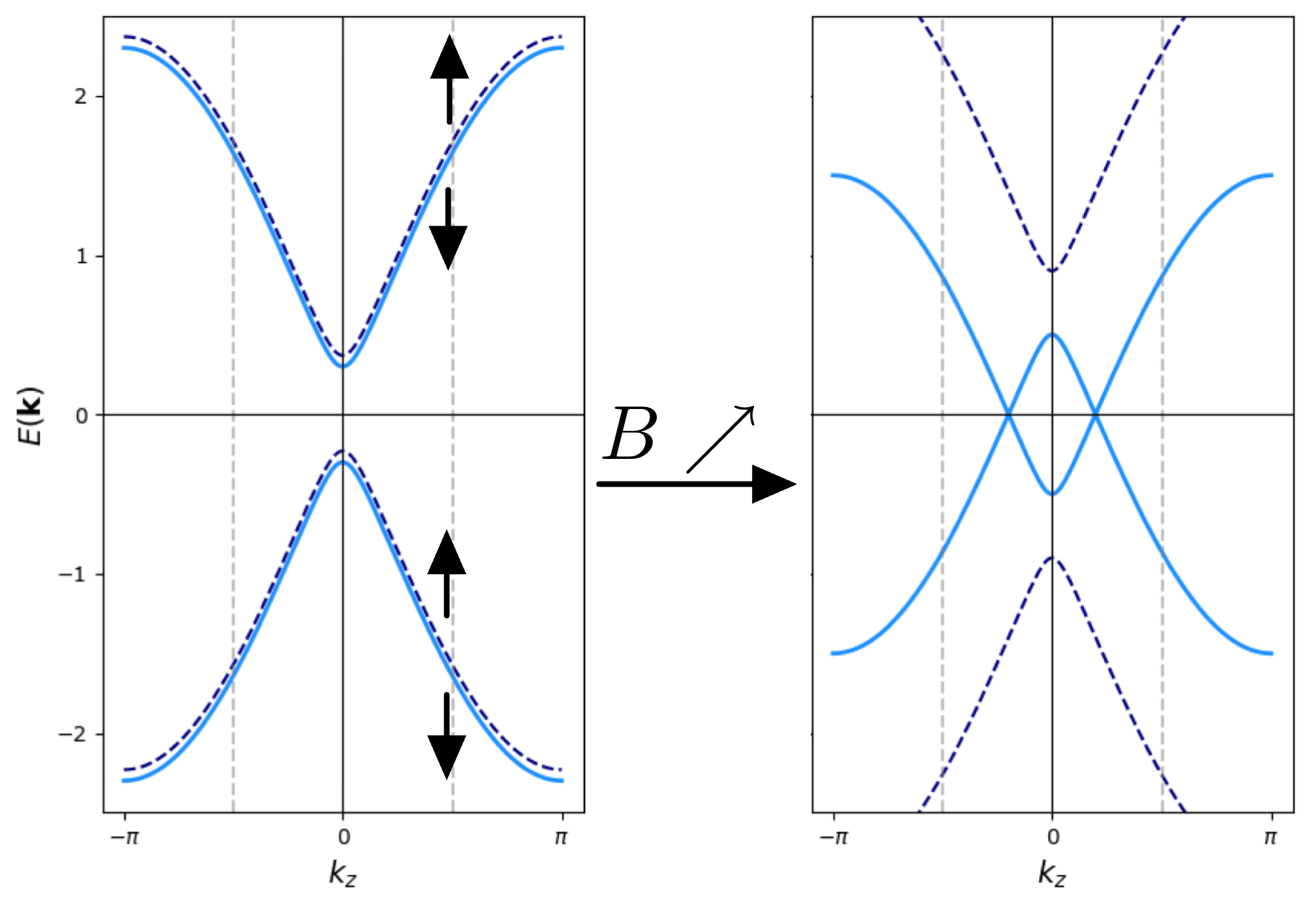}
    \caption{(Nodal phase) Spectrum for $\mu = 2.5$ and, left $B=0$, the bands are 2 times degenerate; right $B=4$, the magnetic field parameter $B$ lifts the degeneracy and for a range of values gives only two crossings. This allows the peeling off of the excessive degrees of freedom (dotted bands) in this real fermion formalism, a crucial step towards the single Weyl fermion.}
    \label{fig:Band_structure_real_weyl}
\end{figure}

We have the following spectrum (FIG. \ref{fig:Band_structure_real_weyl}):

\begin{equation}
    \begin{aligned}
    E_\pm^2 = &\left(B\pm\sqrt{(\mu - \sum \cos(k_i))^2 + \sin (k_z)^2}\right)^2 \\
    &+ \sin(k_x)^2+\sin(k_y)^2
\end{aligned}
\end{equation}

We want to tune the parameters $\mu$ and $B$ so that there are only two degeneracy points at zero energy. The range of values where this is true is shown in FIG.\ref{fig:phase diagram}.
\begin{figure}[h]
    \centering
    \includegraphics[width=0.9\linewidth]{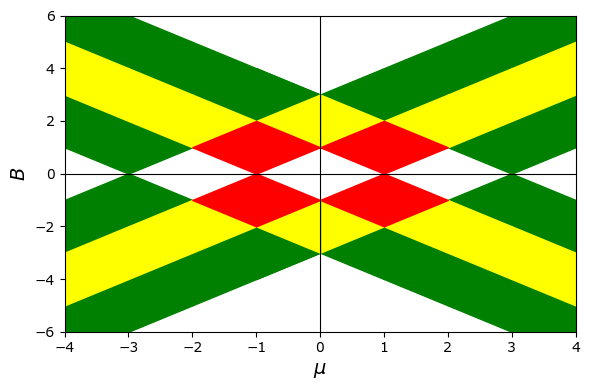}
    \caption{Phase diagram for the number of Weyl cones in model Eq.\ref{eq: NodalH}. In the white region, there are no Weyl cones, in the green region, there is only one single Weyl cone, in the yellow region, there are two Weyl cones and in the red region, there are three Weyl cones.}

\label{fig:phase diagram}
\end{figure}

So what we finally get are two degeneracy points at opposite momenta related by charge conjugation symmetry. This phase is robust against deformations of the Hamiltonian as crossing points are separated in the momentum space, and it is a typical nodal point phase.

In the next section, we will show that an emergent single Weyl fermion is a generic feature of gapless real fermion lattice models
and hence can naturally appear in gapless superconductors or superfluids. Readers who are not interested in this general proof
can skip Sect.\ref{part: NN-no-go-theorem} and proceed directly to Sect.\ref{part: Model_analysis}, \ref{part: Model_analysisII} where concrete models are presented and discussed.

\section{Naturalness of single Weyl fermion in gapless superconductors or superfluids}

\label{part: NN-no-go-theorem}
This section provides an alternative proof of the Nielsen-Ninomiya theorem using tools from differential geometry. Rather than relying on topological charges and Berry curvature, we reinterpret band crossings as intersections of submanifolds in a higher-dimensional space. We then apply this geometric perspective to real fermions, showing how particle-hole symmetry naturally enforces the pairing of band crossings, yet allows for the reconstruction of a single Weyl fermion through specific projections.

Physically, this suggests naturalness of single Weyl fermion in gapless charge $U(1)$ symmetry breaking states such as gapless superconductors or superfluids. They shall form an equivalence class.
Later in Sect.\ref{part: family}, we provide an explicit structure of the equivalence class of Hamiltonians.

\subsection{alternative proof of Nielsen and Ninomiya No-Go theorem}

\label{proof_nogo1}

Here we show an alternative proof of the celebrated Nielsen-Ninomyia no-go theorem \cite{NIELSEN1981173,NIELSEN198120}.
The proof presents a fundamentally different approach to the problem than the original one \cite{NIELSEN198120} and the other ones that we could find in the literature \cite{Friedan:1982nk},\cite{Kiritsis:1986re}. The intuitive topological proof given by Nielsen and Ninomiya \cite{NIELSEN1981173} uses intersection theory as well, but with a projective line and a 3-torus embedded in a four dimensional space. Our approach is more systematic and has the advantage of showing the handedness of the Weyl points directly from the formalism. We believe it is potentially generalizable to other dimensions. Despite requiring some knowledge on differential geometry, this proof is actually quite natural. We present the idea of the proof here (with a $2\times2$ Hamiltonian) and the complete mathematical proof, as well as some refinements, will be reported in the Appendix \ref{A_proof no go}.

The idea, based on the theory of intersections in differential geometry \cite{1974Dt/b}, provides an elegant argument of why the no-go theorem exists only in dimension 3.

\begin{theorem}[Nielsen \& Ninomiya]
\label{thm: NN}
    In a lattice theory with local interaction Hamiltonian ($H(x-y) \rightarrow 0$ fast enough when $x-y \rightarrow\infty$) that is invariant under lattice translation, the number of right-handed and left-handed Weyl fermions is equal, granted the following assumptions on the charge $Q$ are satisfied:
    \begin{itemize}
        \item $Q$ is exactly conserved: $[Q,H] = 0$
        \item $Q$ is locally defined
        \item $Q$ is quantized
        \item $Q$ is bilinear in the fermion field.
    \end{itemize}
\end{theorem}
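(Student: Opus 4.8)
The plan is to recast the counting of Weyl nodes as an \emph{intersection problem} and then invoke the homotopy invariance of the algebraic intersection number. First I would use the hypotheses on $Q$ together with locality and lattice translation invariance to move into momentum space: translation invariance block-diagonalizes $H$ over the Brillouin torus $\mathcal{T}^3$, and locality (fast decay of $H(x-y)$) guarantees that the Bloch Hamiltonian $H(\mathbf{p})$ is a \emph{smooth} (indeed analytic) Hermitian matrix function on $\mathcal{T}^3$. The conservation, quantization, and bilinearity of $Q$ ensure that each band carries a well-defined grading so that, near any touching, the relevant crossing is governed to leading order by a traceless $2\times2$ block $H(\mathbf{p}) = \mathbf{d}(\mathbf{p})\cdot\boldsymbol{\sigma}$ with $\mathbf{d}:\mathcal{T}^3\to\mathbb{R}^3$. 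A Weyl node is then a zero of $\mathbf{d}$, and its handedness is $\mathrm{sgn}\det(\partial_i d_j)$ evaluated at the node.

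Next I would build the six-dimensional total space $\mathcal{M}=\mathcal{T}^3\times\mathbb{R}^3$, where the second factor is the space of traceless $2\times2$ Hamiltonians. It is precisely the coincidence $\dim\mathcal{T}^3 = 3 = \dim\mathbb{R}^3$ that makes the argument live in three spatial dimensions. Inside $\mathcal{M}$ sit two oriented three-dimensional submanifolds: the graph of the Hamiltonian fibers $\Gamma=\{(\mathbf{p},\mathbf{d}(\mathbf{p})):\mathbf{p}\in\mathcal{T}^3\}$, and the base (zero section) $Z=\mathcal{T}^3\times\{0\}$. Having complementary dimension, they generically meet in isolated points; each such point is exactly a band crossing, and a short linear-algebra computation shows that the local intersection sign — the relative orientation of $T\Gamma\oplus TZ$ against the orientation of $T\mathcal{M}$ — equals precisely the handedness $\mathrm{sgn}\det(\partial_i d_j)$. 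Thus the signed intersection number $\Gamma\cdot Z$ equals $N_R-N_L$.

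The decisive step is then to show $\Gamma\cdot Z=0$. The algebraic intersection number is a homotopy invariant of $\mathbf{d}$: because the fiber $\mathbb{R}^3$ is contractible, $\mathbf{d}$ is homotopic to a \emph{constant nonzero} map $\mathbf{d}\equiv\mathbf{c}\neq 0$, whose graph lies entirely in $\mathcal{T}^3\times\{\mathbf{c}\}$ and is therefore disjoint from $Z$. Hence $\Gamma\cdot Z=0$, i.e. $N_R=N_L$. Equivalently, one may phrase this as the vanishing of the degree of $\mathbf{d}$ relative to the regular value $0$, which is forced because a map from the closed manifold $\mathcal{T}^3$ into the noncompact, connected $\mathbb{R}^3$ can always be deformed off any chosen value.

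I expect the main obstacle to lie at the two ``generic position'' junctures. First, the identification of the local intersection sign with the physical chirality must be carried out orientation-consistently; I would fix orientations once and for all on $\mathcal{T}^3$ and on the fiber, and verify the sign on a single model node $\mathbf{d}(\mathbf{p})=\pm\mathbf{p}$ before appealing to naturality. Second, transversality is not automatic: degenerate or higher-order nodes, and multi-band crossings beyond the $2\times2$ reduction, render $\Gamma$ and $Z$ non-transverse. I would handle this with a Sard/transversality perturbation of $\mathbf{d}$ (which leaves $\Gamma\cdot Z$ unchanged while splitting degenerate nodes into simple ones of definite sign), and relegate the general $n$-band reduction together with the analytic-smoothness estimates coming from locality to the Appendix, as the paper anticipates.
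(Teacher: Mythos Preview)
Your proposal is correct and mirrors the paper's approach almost exactly: both reduce to a traceless $2\times2$ block, embed the graph of the Bloch Hamiltonian and the zero section as complementary-dimensional oriented three-submanifolds of $\mathcal{T}^3\times\mathbb{R}^3\cong\mathcal{T}^3\times\mathfrak{su}(2)$, identify the local intersection sign with the Weyl handedness, and then use homotopy invariance to deform to a constant nondegenerate Hamiltonian whose graph misses the zero section. The paper's appendix handles the $n$-band case (via the codimension-three stratum of twofold-degenerate Hermitian matrices) and the transversality/closure issues in essentially the spirit you sketch, so there is no substantive difference in strategy.
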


\begin{proof}
    In this paragraph we restrict ourselves to a $2\times2$ Hamiltonian. We let $\text{Herm}(2)$ be the set of $2\times2$ Hermitian matrices. A 2-band Hamiltonian around degeneracy points can be written as
\[
H(\mathbf{p}) = \epsilon + (\mathbf{p} - \mathbf{p}_{\text{deg}}) \mathbf{b} + (\mathbf{p} - \mathbf{p}_{\text{deg}})_{\kappa} \mathbf{V}^{\kappa}_{\alpha} \sigma^{\alpha} + O((\mathbf{p} - \mathbf{p}_{\text{deg}})^{2})
\]

The handedness of the degeneracy point is determined only by the sign of the determinant of $V$, if it is positive it is right-handed, if it is negative it is left-handed. We can see that by subtracting the identity part from H and by redefining the momentum as is done in \cite{NIELSEN198120}:

\begin{align*}
    H(\mathbf{p}) &\leftarrow H(\mathbf{p}) - \epsilon + (\mathbf{p} - \mathbf{p}_{\text{deg}}) \mathbf{b}\\
    \mathbf{P_{\alpha}} &= \text{sign(det(}\mathbf{V}))\cdot(\mathbf{p} - \mathbf{p}_{\text{deg}})_{\kappa} \mathbf{V}^{\kappa}_{\alpha}
\end{align*}

So that we have:

$$H(\mathbf{p}) = \text{sign(det(}\mathbf{V}))\cdot\mathbf{P}_{\alpha} \sigma^{\alpha}$$

with $\mathbf{P}$ a positive reorientation of momentum so that it is the Hamiltonian for a Weyl fermion of handedness determined by $\text{sign(det(}\mathbf{V}))$

Now $H$ is a map from the p-space $\mathbb{T}^3$ to the 3-dimensional real-space of traceless hermitian 2x2 matrices $\mathfrak{su}(2)$.

In what follows, we will use the concept of manifold: a manifold of dimension n is defined as a a space everywhere locally homeomorphic to $\mathbb{R}^n$.

We place ourselves in the 6-dimensional manifold $\mathbb{T}^3\times\mathfrak{su}(2)$, in which we define the 3-dimensional submanifolds  \( S := \{ (p, H(p)) \mid p \in \mathbb{T}^3 \} \) and $\mathbb{T}^3 \times \{0\}$, which are automatically oriented.

What we want to study is the intersection number\cite{1974Dt/b}:
$$\text{Int}(S, \mathbb{T}^3 \times \{0\})$$
which we define below (FIG. \ref{fig:intersection}) and show \underline{is exactly} the sum of the signs of $\text{det}(V)$ at every intersection point and so \underline{the sum of handedness of all the Weyl cones} in the Brillouin zone.  \\

The intersection number can be defined only if the dimensions of the two sub-manifolds are supplementary. Intuitively it is the algebraic number of intersections so for example in figure FIG. \ref{fig:intersection} it is 0.\\

The orientation number $\epsilon$ is defined by taking a direct basis of each sub-manifold S and S' at the intersection point and concatenate the two basis to get a basis of the embedding manifold. $\epsilon$ is set to be $+1$ if the basis is direct and $-1$ if it is indirect. Int($S,S'$) is defined to be the sum of the orientation numbers at each intersection points.

\begin{figure}[h]
    \centering
    \includegraphics[width=0.9\linewidth]{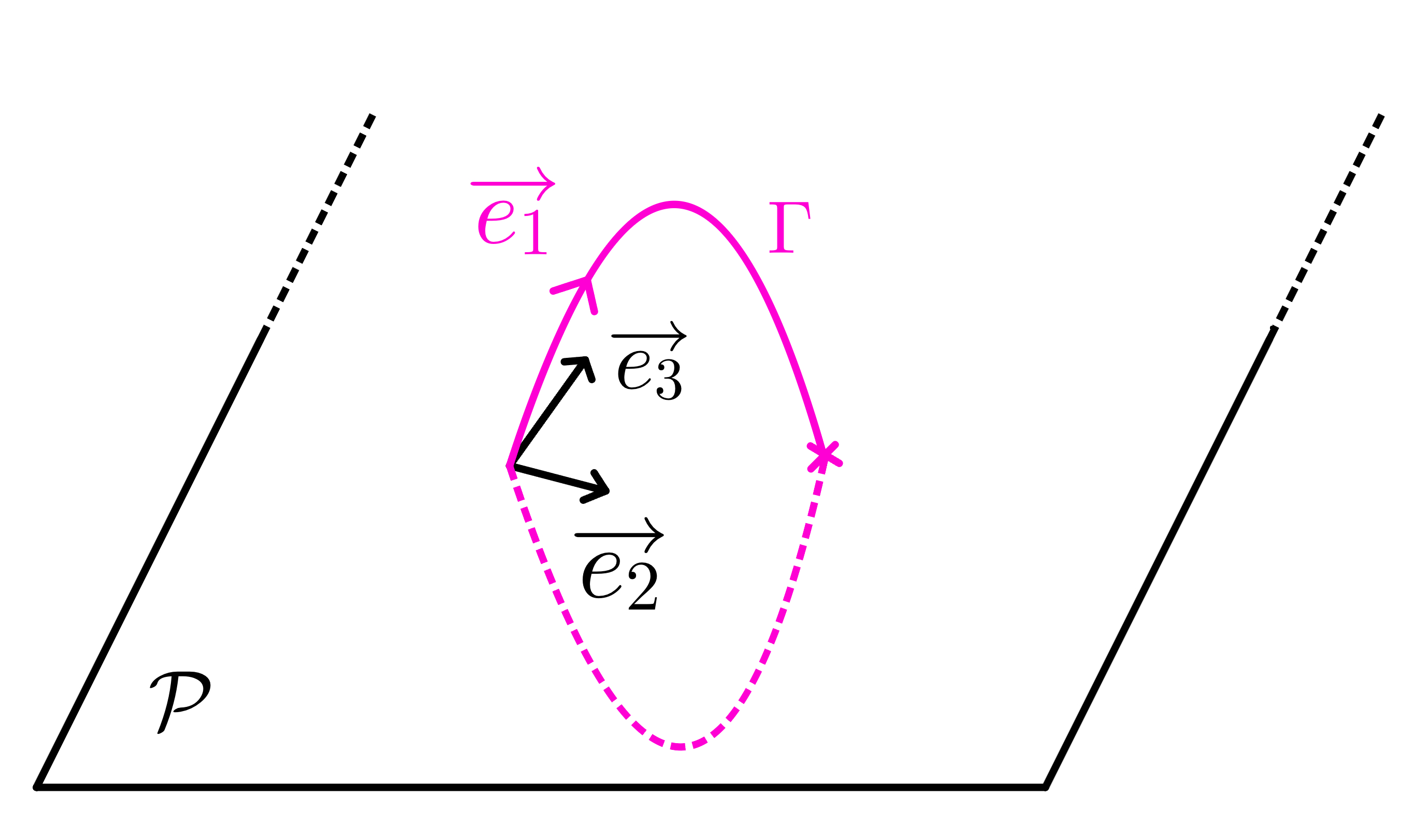}
    \caption{Example of definition of the orientation number in $\mathbb{R}^3$:  $\epsilon = \text{sign(det(}\vec{e_1},\vec{e_2},\vec{e_3}))$. Here S=$\Gamma$ is a circle and S'=$\mathcal{P}$ is a plane}

\label{fig:intersection}
\end{figure}

In our case let us calculate one orientation number:

We have $H(\mathbf{p}) = \text{sign(det(}\mathbf{V}))\cdot\mathbf{P}_{\alpha} \sigma^{\alpha}$ so we can take the following basis for S:

$$((\mathbf{P}_x, \eta \sigma^x), (\mathbf{P}_y, \eta \sigma^y),(\mathbf{P}_z, \eta \sigma^z)), \qquad \eta = \text{sign(det(}\mathbf{V}))$$

and for $\mathbb{T}^3 \times \{0\}$:

$$((\mathbf{P}_x, 0), (\mathbf{P}_y, 0),(\mathbf{P}_z,0))$$

We then concatenate the two bases and operate with transvections and transpositions:

\begin{align*}
    (&(P_x, \eta \sigma^x), (P_y, \eta \sigma^y),(P_z, \eta \sigma^z),\\
    &(P_x, 0), (P_y, 0),(P_z,0))\\
    \rightarrow (&(0, \eta \sigma^x), (0, \eta \sigma^y),(0, \eta \sigma^z),\\
    &(P_x, 0), (P_y, 0),(P_z,0))\\
    \rightarrow -&((P_x, 0), (P_y, 0),(P_z,0),\\
    &(0, \eta \sigma^x), (0, \eta \sigma^y),(0, \eta \sigma^z)
\end{align*}

So the orientation number is $-\eta = -\text{sign(det(}\mathbf{V}))$ so the opposite of the handedness of the Weyl cone. We will show that the sum of these numbers is 0:

It is a known property that the intersection number is invariant by homotopy \cite{1974Dt/b}, a proof is given in \ref{Part_A_Intersection}. Then define the homotopy:

$$F(s) = (1-s)S + s (\mathbb{T}^3\times\{A\}), \qquad A = \begin{pmatrix}
    1&0\\
    0&-1
\end{pmatrix}$$

So that $F(1) = \mathbb{T}^3\times\{A\}$, then: $$\text{Int}(S,\mathbb{T}^3\times\{0\}) = \text{Int}(\mathbb{T}^3\times\{A\},\mathbb{T}^3\times\{0\}) = 0$$

Because $A$ has non degenerate eigenvalues.
This finishes the proof in the case of a 2-band model because the only quantized charge we can define is identity (if the eigenvalues are fixed to one at some Weyl point, then they are fixed to one everywhere).
\end{proof}

\subsection{The case of real fermions}

In the real fermion formalism, the implications of the Nielsen-Ninomiya no-go theorem manifest differently due to the intrinsic particle-hole (charge conjugation) symmetry of the system. This symmetry, which is a structural feature of the real fermion representation, constrains the possible configurations and chirality of band crossings.

To understand this, consider the constraint derived earlier (Eq.\ref{eq:H_real}):

$$H_\text{Real}(\mathbf{k}) = -H^*_\text{Real}(-\mathbf{k})$$

This condition enforces that for any band crossing at momentum $\mathbf{k}_0$, there exists a corresponding crossing at $-\mathbf{k}_0$, with opposite chirality. Consequently, the real fermion representation automatically pairs each Weyl cone with its mirror image under inversion of momentum, and these paired crossings have opposite topological charges.

This enforced pairing structure due to charge conjugation predicts, independently from theorem \ref{thm: NN}, that the minimal number of real fermion band crossings is two, appearing at $\pm \mathbf{k}_0$. Each crossing in such a pair carries opposite chirality (e.g., left-handed at $\mathbf{k}_0$ and right-handed at $-\mathbf{k}_0$).

However, this also reveals a structural advantage: by leveraging the charge-conjugation symmetry, we can isolate a single chiral degree of freedom from a pair of real fermion cones. This becomes feasible when we reinterpret the two crossings as particle-hole partners, and either project out the redundant degrees of freedom or reinterpret the crossings as carrying opposite emergent charges. This subtle reinterpretation, unique to the real fermion framework, creates the possibility of realizing an effective single Weyl cone in the infrared, without violating the no-go theorem.

Furthermore, the parity of the number of real fermion band crossing pairs (i.e., $M = 1, 2, \ldots$ giving $2M$ crossings) determines the effective low-energy theory:

\begin{itemize}
    \item $M$ = 2, 4,\ldots : leads to multiple Dirac cones or even numbers of Weyl cones, consistent with the traditional no-go statement for complex fermions.
    \item $M$ = 1, 3,\ldots : corresponds to a more unusual low energy theory where, due to parity considerations, at least one of the pairs of real fermion crossings cannot be combined with another pair to form a Dirac fermion. In this article we are mostly interested in the case $M$ = 1 where the single pair of real fermion crossings can be manipulated to yield a single emergent Weyl fermion through projection or symmetry-based reinterpretation.
\end{itemize}

This real fermion structure thus circumvents the conventional obstruction by encoding symmetry relations that are absent in complex fermion descriptions. The enforced charge conjugation pairing in momentum space becomes a powerful tool to reinterpret lattice band structures and serves as the foundation for constructing single Weyl systems on the lattice.

\subsection{Recovering the single Weyl fermion}

In this section, we show how the foundation we just derived for real fermions allows for the explicit reconstruction of a single Weyl fermion by projection using the Schrieffer-Wolff transformation to break the redundancy imposed by real fermion symmetry.

\subsubsection{Weyl fermion in the real fermion formalism}

Let us first point out the real fermion expression of the Weyl fermion.
In the standard complex formalism, a Weyl fermion with a given Handedness can be assigned with the following Hamiltonian

\begin{equation}
    \label{eq: CWeyl}
    H_{\text{complex}}(\mathbf{k}) = \mathbf{k}\cdot\sigma
\end{equation}

After transformation to real fermion formalism, a Weyl fermion is expressed as:

\begin{equation}
    \label{eq: RealWeyl}
    H_{\text{Real}}(\mathbf{k}) = 
        k_x\sigma^x+k_z\sigma^z -k_y\tau^y\otimes\sigma^y
\end{equation}

\subsubsection{Projecting out the excessive degrees of freedom using the Schrieffer-Wolff transformation}
\label{part: SW}


Now we begin with Hamiltonian (Eq.\ref{eq: NodalH}) which can describe a fully gapped SPT or in this case a DIII class topological superconductor subject to a magnetic field. The band structure in a strong magnetic field is shown in FIG. \ref{fig:Band_structure_real_weyl}. The effect of a strong magnetic field is to first isolate low energy degrees of freedom so that we can reconstruct a single Weyl fermion. To do so, we first discard the two higher-energy bands (represented by blue dotted lines), retaining only the lowest-energy bands. We then re-express the remaining degrees of freedom in a form characteristic of a Weyl fermion.

To achieve this, we employ the Schrieffer-Wolff (SW) transformation, a unitary transformation designed to approximately block-diagonalize a Hamiltonian by integrating out high-energy states perturbatively. This transformation is particularly effective for separating low-energy effective theories from higher-energy excitations. We leave the calculations in appendix \ref{appendix:SW}.

Now we can zoom in to the region near $\pm {\bf k}_{deg}$ (the subscript {\em deg} refers to the degeneracy points) where the two middle bands become degenerate, i.e. cross each other (See Fig.\ref{fig:Band_structure_real_weyl}).

After transformation, we end up with a 2-band Hamiltonian $H_{red}$, which writes:

\begin{equation}
\label{eq:Hred}
\begin{aligned}
    H_{red}(\mathbf{k+k_{deg}}) &= \mathbf{k}_\kappa V^\kappa\cdot\sigma\\
    H_{red}(\mathbf{k-k_{deg}}) &= \mathbf{k}_\kappa V^\kappa\cdot\sigma^T
\end{aligned}
\end{equation}

To recover the Weyl fermion, we only need a few last tricks. We go to the IR limit, let $\chi_R (\mathbf{k})= \chi (\mathbf{k}+\mathbf{k}_{deg})$, $\chi_L (\mathbf{k})= \chi (\mathbf{k}-\mathbf{k}_{deg})$ which are now 2-components real fermions, and define a new 4-component fermion by letting:
\begin{equation}
    \chi' = \begin{pmatrix}
        \chi_R \\ 
        \chi_L
    \end{pmatrix}
\end{equation}
The Hamiltonian for this 4-component fermion is then:
\begin{widetext}
\begin{equation}
    \mathcal{H}_{\text{eff}} = \int d^{3}k \, \chi'(-\mathbf{k}) \left( P_x\sigma^x+P_z\sigma^z +P_y\tau^z\otimes\sigma^y \right)\chi'(\mathbf{k})
\end{equation}
where we redefined momentum space by setting $\mathbf{P} = \mathbf{k}_\kappa V^\kappa$.
\end{widetext}

Notice that $\chi_R^{\dagger}(x) =  \chi_L^T(x)$ (because $\chi_R^{\dagger}(\mathbf{k}) =  \chi^T_L(-\mathbf{k})$), so by this concatenation process, we lost the reality condition. We can recover it by applying the transformation (Eq.\ref{eq:R_trans}) once again \cite{Kapoor}:

\begin{equation}
\label{eq: trans_majorana}
    \chi(x) :=  \frac{1}{\sqrt{2}} \begin{pmatrix} I_2 & I_2 \\ -iI_2 & iI_2 \end{pmatrix}  \chi'(x)
\end{equation}
\begin{equation}
    R = \frac{I}{2} - \frac{i}{2}(\tau^x -\tau^y +\tau^z)
\end{equation}

The form of $H_{red}$ around the degeneracy points (Eq.\ref{eq:Hred}) then automatically give us the formulation of a single Weyl fermion in the real fermion formalism. That is:
\begin{equation}
    H_{\text{red/Real}}(\mathbf{k}) = 
        P_x\sigma^x+P_z\sigma^z -P_y\tau^y\otimes\sigma^y
\end{equation}
therefore describing a right or left-handed Weyl fermion depending on the sign of $\text{det}(V)$. The chiral charge operator in the real fermion formalism therefore takes the value:
\begin{equation}
    Q_\text{chiral} = -i\sigma^x\sigma^z(-\tau^y\otimes\sigma^y) = \tau^y.
\end{equation}
The two eigen values of this operator, $\pm 1$ can be related to particle-like and anti-particle-like states of a Weyl fermion.

Finally, let us remark here that for this part of discussion, we have swapped $\tau$ (usually for the charge space) and $\sigma$ (usually for the spin space) to illustrate the relation to the Weyl fermion.
More precisely, the model above forms a $\tau$-$\sigma$ dual 
of the Weyl fermion model as discussed previously\cite{Zhou1}.

Before leaving this section, we want to mention an explicit construction of a Dirac fermion out of a $(2+1)D$ square lattice Hubbard model of real fermions by Affleck et al. in Ref.\cite{Affleck17}. Although our current application of a theorem of the intersections of two sub-manifolds and later discussions in the article are exclusively on $(3+1)D$ lattice models, a pair of crossing points in real fermions related to the charge conjugation symmetry also play a crucial role in that concrete construction in $(2+1)D$.

\section{Fermion Lattice Model Analysis I}
\label{part: Model_analysis}

We now apply the real fermion framework to concrete lattice models of gapless superconductors that exhibit single Weyl cone dynamics. By following the three paths introduced earlier, in this and next section, we present a few representative models and analyze their symmetry properties via constructing exactly conserved charges. We will also discuss the relations to the exact symmetries in recent proposals of chiral lattice fermions\cite{Chatterjee_2025,gioia2025exactchiralsymmetries31d}.

In the Sect.\ref{part: family}, We also demonstrate that all these lattice models can be further unified under a common structure governed by two dual copies of $3\otimes3$ representations of an $Spin(4)$ group.

 The lattice models to be studied illustrate our three main paths towards single Weyl fermion are summarized below:
 
Path \textbf{a}): \\
 A simple lattice model of a time reversal symmetric tQCP with protection symmetry $G_p=Z_2^T$, the time reversal symmetry group. And we restrict ourselves to 
 a generic tQCP where the change of topologies is minimum. The change of band toplogies here physically is related to the recontruction of surface majorana cones and dictates the gapless fermions at tQCPs (see \ref{FDF1})\footnote{ Here We label the majorana band topologies here via $N_w=2n, n=0,1,2,...$. In literature, the band topologies sometimes are also labeled by $\nu$ which is related to $N_w$ via $N_w=2\nu$}.
 
 The change takes a fundamental value consistent with the protecting symmetry $G_p$ which in this case is
 $\delta N^f_w=2$ (the superscript refers to the fundamental value of changes).
 The specific lattice model describes a tQCP in the DIII class topological superconductors.
The IR limit of such a tQCP has been studied quite extensively in Ref.\cite{Zhou1,Zhou2} and its lattice model in a connection to a 3d boundary of a 4d topological state was also presented\cite{Zhou3}.

Path \textbf{b}):\\
 Applying a strong magnetic field to a fully gapped SPT again in the DIII class with protecting symmetry $G_p=Z_2^T$, i.e. a topological superconductor or superfluid that break the charge $U(1)$ symmetry. Quantum phase transitions into these gapless superconducting nodal points 
 in Ref.\cite{Yang_Zhou,Zhou1} and a lattice model was presented in a unpublished thesis\cite{Kapoor}.

Path \textbf{c}): \\
 The last three models involve either a generic tQCP or fine-tuned multi-critical tQCPs which are further subject to applications of time reversal symmetry breaking actions. It can be thought as a hybrid of Path \textbf{a}) and Path \textbf{b}). In Type \textbf{c}), we study lattice models of tQCPs with the change of topologies taking values of integer multiples of the fundamental values (i.e. $\delta N^f_w=2$), i.e. we work with tQCP models with $\delta N_W=4,8$ in additional to a generic tQCP with $\delta N_w=\delta N^f_w=2$. 
A lattice model that was originally proposed a few months ago in \cite{gioia2025exactchiralsymmetries31d} belongs to one of Type \textbf{c} models in our classification with $\delta N_w=8$. 

We will focus on the UV completion of the emergent infrared single Weyl cone physics in gapless superfluids and corresponding conserved charge operators in lattice models in the 3-torus momentum space, $\mathbb{T}^3$.

Before starting detailed discussions, let us point out that the number of degrees of freedom
(normalized in terms of 3D Dirac fermions), $N_D$ (the subscript D here refers to Dirac fermions), appears at a tQCP in fermionic SPTs in general depends on both $G_p$, the protecting symmetry $G_p$ and $\delta N_w$, the change of topologies across a tQCP.

We can introduce such a general relation as

\begin{eqnarray}
N_D=N_D(G_p, \delta N_w)= N^f_D(G_p) \frac{\delta N_w}{\delta N^f_w}.
\label{FDF1}
\end{eqnarray}
Here $N^f_D(G_p)$ is the number of fundamental degrees of freedom when the change of topologies takes a fundamental value. $\delta N^f_w=\delta N^f_w(G_p)$ again is the fundamental value of the change of topologies
defined by the protecting symmetry $G_p$. $\delta N_w$ is the actual change of topologies at a tQCP which interests us.

For the cases we are interested with $G_p=Z^T_2$ being the time reversal symmetry, $N^f_D(G_p=Z^T_2)=\frac{1}{2}$ which is equivalent to the degree of freedom of a Weyl fermion, as being studied extensively in Ref. \cite{Zhou1,Zhou1,Zhou3}. This aspect has also been utilized recently to understand dynamic critical exponents $z$ in weakly and strongly interacting tQCPs in topological states\cite{Yang2025}. 

Therefore in our current discussions, Eq.\ref{FDF1} indicates that

\begin{eqnarray}
N_D=\frac{1}{4} {\delta N_w}
\label{FDF2}
\end{eqnarray}
which implies that $N_D=1,2$ when $\delta N_w=4,8$ as $\delta N^f_w(Z^T_2)=2$.

Only at a generic tQCP with a minimum change of topologies $\delta N=2$, 
$N_D=\frac{1}{2}$ exactly matches the degree of freedom of a Weyl fermion which was observed in previous studies of tQCPs.
And this is also the main reason that, for our purspose, at tQCPs with $\delta N_w$ being integer multiples of the fundamental values $\delta N^f_w=2$ such as $\delta N_w=4,8$, time reversal symmetry breaking actions are always needed to further reduce the number of degrees of freedom $N_D$ to $N_D=\frac{1}{2}$, the degree of freedom in a single Weyl fermion. 

As we have predicted before, and will be further seen below, in both Path \textbf{b}) and Path \textbf{c}), the time reversal symmetry has to be broken in the lattice models, in addition to the charge $U(1)$ spontaneously breaking the symmetry to make the construction feasible. Only in Path \textbf{a}), the T-symmetry can be fully preserved. 

In this section, we will first focus on Path \textbf{a}) and Path \textbf{b}), and illustrate the ideas via two simplest lattice models, respectively. The infrared properties of gapless superfluids in these models have been previously studied and are known. And, as stated before, here we will focus on their UV completion and 
the UV completed symmetry group.

\subsection{ Lattice Model I: \\A tQCP approach with $\delta N_w=\delta N^f_w(G_p)=2$}



This model follows path \textbf{a}) discussed above. It was previously introduced for a tQCP in a DIII class topological superconductors where strong interactions are present and there is an emergent infrared space-time Lorentz symmetry due to strong coupling\cite{Yang_Zhou,Zhou1,Zhou1,Zhou3}. Practically it can be applied to describe a tQCP in a 3d time-reversal symmetric p-wave superconductor or, isomorphically, a $^3$He superfluid phase\cite{Leggett75,Volovik03}.

We introduce an elementary lattice model of a T-invariant tQCP with protection symmetry $G_p=Z_2^T$ and a minimal change of topology of $\delta N_W = 2$, i.e. the change of topologies in this model takes a fundamental value $\delta N^f_w(G_p)=2$. As discussed in section \ref{part: 2_general_single_weyl}, the T-invariant tQCP can be obtained by setting the mass parameter below $\epsilon$ to 0:
\begin{equation}
\label{eq: tQCP}
\begin{aligned}
H(\mathbf{k}) = 
& \sin(k_x) \tau^z \otimes \sigma^x +  \sin(k_y) \tau^z \otimes \sigma^z +  \sin(k_z) \tau^x\\ &+ \left(\mu-\sum_i \cos(k_i) \right) \tau^y
\end{aligned}
\end{equation}
where $\mu=3+\epsilon$.

As will be discussed in section \ref{part: Charge}, due to the bands being everywhere Kramer-degenerate, there are many conserved symmetry charges one can define for this model. Here is the most symmetric one:
\begin{equation}
    \begin{aligned}
    Q_1(\mathbf{k}) = &\sin(k_x) \sigma^z - \sin k_y \sigma^x - \sin k_z \tau^y \otimes\sigma^y \\
    &- \frac{\sum_i \sin^2(k_i)}{\sum_i \sin^2(k_i/2)}\tau^x \otimes \sigma^y.
\end{aligned}
\label{eq: Q1}
\end{equation}

The symmetry charge flows into a desired structure of $$-\tau^x \otimes \sigma^y$$ in the infrared limit of ${\bf k}\rightarrow 0$. This is an emergent chiral charge for a tQCP\cite{Zhou1}. As predicted for any chiral charge of a single Weyl fermion, the charge goes to 0 at $\mathbf{k} = (\pi,\pi,\pi)$. It is possible to find the real space charge by expanding the last term in a Fourier series:

$$c_{\bf r} =  \int_{BZ} \frac{\sum_i \sin^2(k_i)}{\sum_i \sin^2(k_i/2)}e^{-i\bf r\cdot \bf k} d\bf k$$

Then the real space charge writes:

\begin{equation}
    \begin{aligned}
        \mathcal{Q}_1 = \sum_{\bf i} [&\chi_{{\bf i}+\hat{x}}i\sigma^z\chi_{\bf i} -\chi_{{\bf i}+\hat{y}}i\sigma^x\chi_{\bf i} -\chi_{{\bf i}+\hat{z}}i\tau^y\otimes\sigma^y\chi_{\bf i}\\
        & -\sum_{\bf r} c_{\bf r} \chi_{{\bf i}+{\bf r}}\tau^x\otimes\sigma^y\chi_{\bf i}] +h.c.
    \end{aligned}
\end{equation}
with ${\bf r}$ going through the whole lattice. We use $\hat{x},\hat{y},\hat{z}$ to denote the neighboring sites along the $x,y,z$-directions respectively.

This charge has a complicated real-space spatial structure because of the k-dependence in Eq.\ref{eq: Q1}. 
Alternatively, we can have the following charge:

\begin{equation}
    \begin{aligned}
    Q_2(\mathbf{k}) = &\sum_i \sin^2(k_i/2) \times \left(\sin(k_x) \sigma^z - \sin k_y \sigma^x - \sin k_z \tau^y \otimes\sigma^y \right)\\
    &- \sum_i \sin^2(k_i)\tau^x \otimes \sigma^y
\end{aligned}
\end{equation}
which has a simpler real space expression with second nearest neighbor couplings:
\begin{equation}
    \begin{aligned}
    \mathcal{Q}_2 = \sum_{\bf i}&[\left(3\chi_{{\bf i}+\hat{x}}-\frac{1}{2}(\sum_{\hat{b}} \chi_{{\bf i}+\hat{x}+\hat{b}}+\chi_{{\bf i}+\hat{x}-\hat{b}} ) \right) i\sigma^z \chi_{\bf i} \\&-\left(3\chi_{{\bf i}+\hat{y}}-\frac{1}{2}(\sum_{\hat{b}} \chi_{{\bf i}+\hat{y}+\hat{b}}+\chi_{{\bf i}+\hat{y}-\hat{b}} ) \right) i\sigma^x \chi_{\bf i} \\&-\left(3\chi_{{\bf i}+\hat{z}}-\frac{1}{2}(\sum_{\hat{b}} \chi_{{\bf i}+\hat{z}+\hat{b}}+\chi_{{\bf i}+\hat{z}-\hat{b}} ) \right) i\tau^y\sigma^y \chi_{\bf i} \\&- \left(3\chi_{\bf i}- \chi_{{\bf i}+2\hat{x}}- \chi_{{\bf i}+2\hat{y}}- \chi_{{\bf i}+2\hat{z}}\right) \tau^x\sigma^y\chi_{\bf i}]+h.c.
\end{aligned}
\end{equation}
with $\hat{b}$ only going through the neighboring sites.

\subsection{Lattice model II:\\ A gapped SPT with strong Time-reversal-symmetry breaking fields}

This model follows path {\bf b)}. It can applied to study a fully gapped SPT in DIII class but further subject to a strong magnetic field\cite{Yang_Zhou,Kapoor}.

We add a magnetic field to a gapped SPT with Time-Reversal symmetry (corresponding to the previous model with $\epsilon \neq 0$), therefore producing a superconductor or superfluid that breaks the charge U(1) symmetry. We choose a particular representation of the $\Gamma$ matrices of section \ref{part: 2_general_single_weyl}, so that the Hamiltonian looks like a lattice completion of a p-wave superconductor Hamiltonian (which presents explicit charge U(1) symmetry breaking). The model studied is the following:
\begin{widetext}
\begin{equation}
\label{eq:H_p-wave}
H (\mathbf{k})= - \sin(k_x) \tau^z \otimes \sigma^z +  \sin(k_y) \tau^x \otimes I_2+ \sin(k_z) \tau^z \otimes \sigma^x + \left( \mu - \sum_i \cos(k_i) \right) \tau^y \otimes I_2 +B \tau^y \otimes \sigma^z
\end{equation}
where $\mu$ is chosen so that it is a fully gapped SPT when $B=0$ (See Fig.\ref{fig:phase diagram}). This model was previously proposed and studied in Ref.\cite{Kapoor}.

This $k$-space Hamiltonian can be obtained from a real space Hamiltonian:
\begin{equation}
    \begin{aligned}
        \mathcal{H} = 
 \mu \sum_{\bf i} c_{\bf i}^{\dagger} c_{\bf i} 
- \frac{1}{2} \sum_{\bf i} \left\{ 
     c_{\bf i}^{\dagger} c_{{\bf i}+\hat{x}} 
    +  c_{\bf i}^{\dagger} c_{{\bf i}+\hat{y}} 
    +  c_{\bf i}^{\dagger} c_{{\bf i}+\hat{z}} 
    + \text{h.c.} 
\right\} 
+ \sum_{\bf i} c_{\bf i}^{\dagger} \left( \vec{B} \cdot \vec{\sigma} \right) c_{\bf i} \\
- \frac{1}{2} \sum_{\bf i} \left\{ 
     c_{\bf i} (\sigma^z) c_{{\bf i}+\hat{x}} 
    + i  c_{\bf i} c_{{\bf i}+\hat{y}} 
    -  c_{\bf i} (\sigma^x) c_{{\bf i}+\hat{z}} 
    + \text{h.c.} 
\right\}
    \end{aligned}
\end{equation}
\end{widetext}

Let us find the charge for this system. We have a commuting operator:
\begin{equation}
    \label{eq: Q_pwave}
    Q(\mathbf{k}) = f(\mathbf{k})\left[ \sin(k_z) \tau^x \otimes \sigma^y-( \mu - \sum_i \cos(k_i)) I \otimes \sigma^z\right]
\end{equation}
with $f$ an odd function of momentum to be consistent with charge conjugation. This implies that the symmetry group is non-compact as the charge is a continuous function and has to be 0 at $\mathbf{k}=0$.

By setting ${\bf k}=\pm {\bf k}_{deg}$ at the band crossing points $\pm {\bf k}_{deg}$, one finds the non-zero symmetry charges at the two crossings respectively.

\begin{figure}[h]
    \centering
    \includegraphics[width=\linewidth]{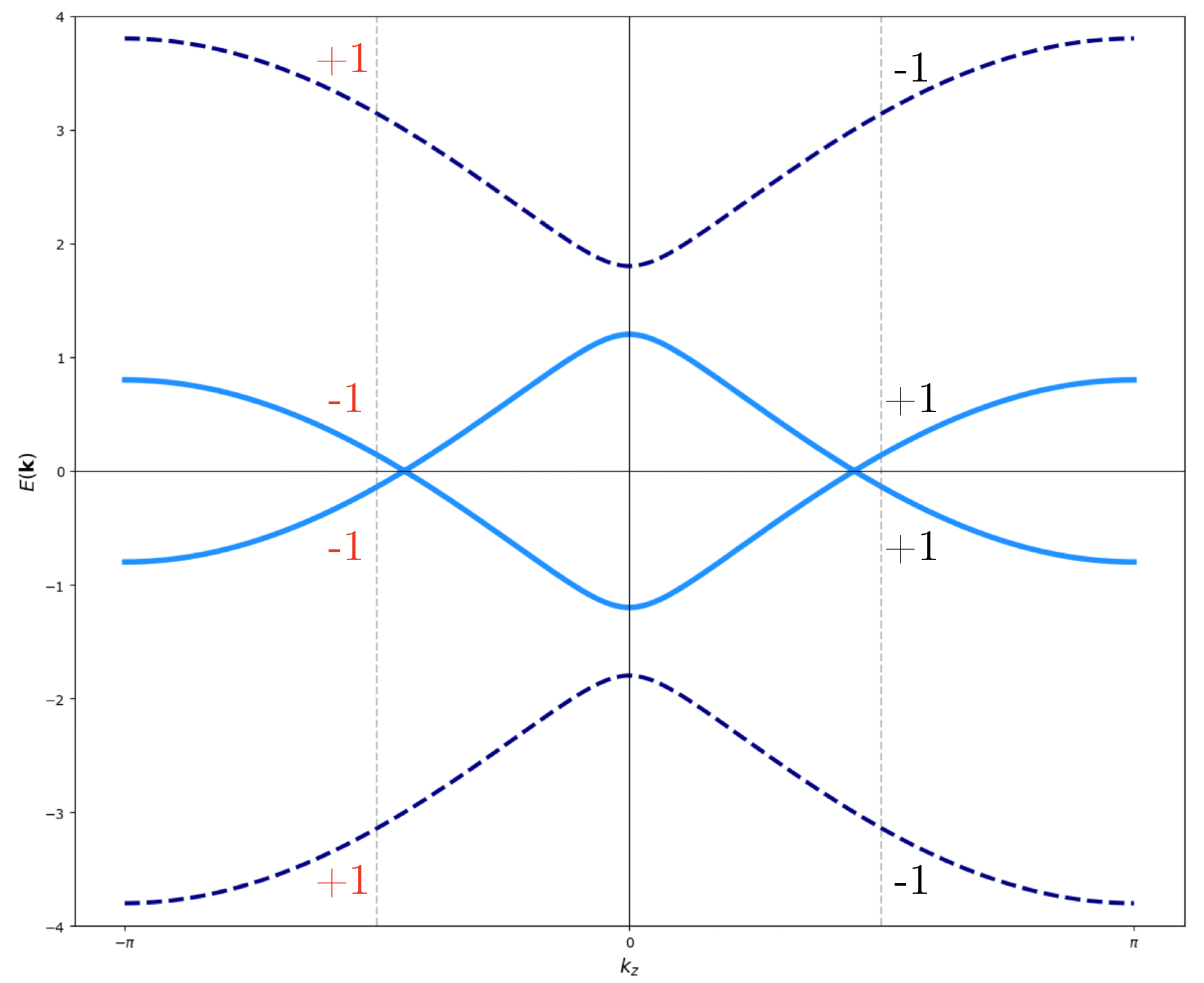}
    \caption{Charges (in {\it arbitrary units}) in the solid blue (dotted purple) band vary from +1 (-1) to -1 (+1) when $\mathbf{k}$ varies from $+\mathbf{k}_{deg}$ to $-\mathbf{k}_{deg}$ following Eq.\ref{eq: Q_pwave} with $f(\mathbf{k}) = \sin(k_z)$.}
    \label{fig:charge_nodal2}
\end{figure}

In Fig.\ref{fig:charge_nodal2}, we illustrate how the charges assigned to each of four bands vary according to the charge operator defined in Eq.\ref{eq: Q_pwave}. Later in the next section and appendix F, we will show that such charge assignments are always implementable and they are the direct consequence of charge conjugation symmetry.

We further choose $f(\mathbf{k}) = \sin(k_z)$ to have a simple real space charge that involves only second-neighbor coupling:
\begin{equation}
\begin{aligned}
    \mathcal{Q} = \sum_{\bf i}&[\left(\frac{1}{2}(\sum_{\hat b} \chi_{{\bf i}+\hat{z}+{\hat b}}+\chi_{{\bf i}+\hat{z}-{\hat b}} ) -\mu\chi_{{\bf i}+\hat{z}}\right) iI\otimes\sigma^z \chi_{\bf i} \\&+ \frac{1}{2}\left(\chi_{\bf i}-\chi_{{\bf i}+2\hat{z}}\right) \tau^x\otimes\sigma^y\chi_{\bf i}]+h.c.
\end{aligned}
\end{equation}

As predicted by the second no-go theorem, the action of the symmetry group on an on-site operator involves neighboring sites and is clearly non-local.

\section{Fermion Lattice Model Analysis II}
\label{part: Model_analysisII}

In the previous section, we have outlined the two independent approaches to single Weyl-cone dynamics. One (Path a) relies on topological quantum critical points (tQCPs) but with time reversal symmetry while the second one (Path b) is via applying a reversal symmetry breaking field to a generic gapped symmetry protected state of topological superconductors. 

It is also possible to have a hybrid approach by further applying time reversal symmetry breaking fields to tQCPs. As discussed before at the beginning of the previous section, tQCPs can be characterized by the change of global topologies $\delta N_w$ and the corresponding degrees of fermion freedom $N_D=\frac{1}{4}\delta N_w$. Below we will discuss 
three lattice models corresponding to $\delta N_w=2,4,8$ respectively.
Among them, $\delta N_w=8$ is very special as it is directly related to a recently proposed lattice model for chiral fermions.

The model of $\delta N_w=2$ with $N_D=\frac{1}{2}$ forms the fundamental
representation of tQCPs in SPTs with the protecting symmetry group $G_p=Z^T_2$, i.e. time reversal symmetric with $T^2=-1$.
It has been the focus of a few previous studies by one of the authors.
We will start with this simplest limit and discuss what happens when an additional magnetic field is applied.

\subsection{Model III: a tQCP with $\delta N_w=\delta N^f_w(G_p)=2$ further subject to T-symmetry breaking actions}

This model is closely related to Eq.\ref{eq: tQCP} with $\epsilon=0$ but further with a time reversal symmetry breaking magnetic field $B$,

\begin{equation}
\label{eq: tQCP2B}
\begin{aligned}
H(\mathbf{k}) = 
& \sin(k_x) \tau^z \otimes \sigma^x +  \sin(k_y) \tau^z \otimes \sigma^z +  \sin(k_z) \tau^x\\ &+ \left(3-\sum_i \cos(k_i) \right) \tau^y +B I \otimes \sigma^y.
\end{aligned}
\end{equation}
The result of a B-field here is to further break the two-fold degeneracy of the Hamiltonian. 

The number of Weyl fermions in Eq.\ref{eq: tQCP2B} depends crucially on 
the amplitude of $B$. Of particular interest to us is 
when 
\begin{eqnarray}
 i) 0< B < 2; &\mbox{  } ii) 4< B< 6.   
\end{eqnarray}
Eq.\ref{eq: tQCP2B} leads to one single pair of isolated crossing points along the $k_z$ axis at $\pm k_0$ where $$2\sin \frac{k_0}{2} =B$$.

A conserved charge for this model is:
\begin{equation}
\label{eq: Q_modelIII}
    Q(\mathbf{k}) =\sin^2(k_z)\tau^x\otimes \sigma^y  +\sin(k_z) [ 3-\sum_{i=1,2,3} \cos(k_i) ] \tau^y\otimes \sigma^y.
\end{equation}
\begin{figure}[h]
    \centering
    \includegraphics[width=\linewidth]{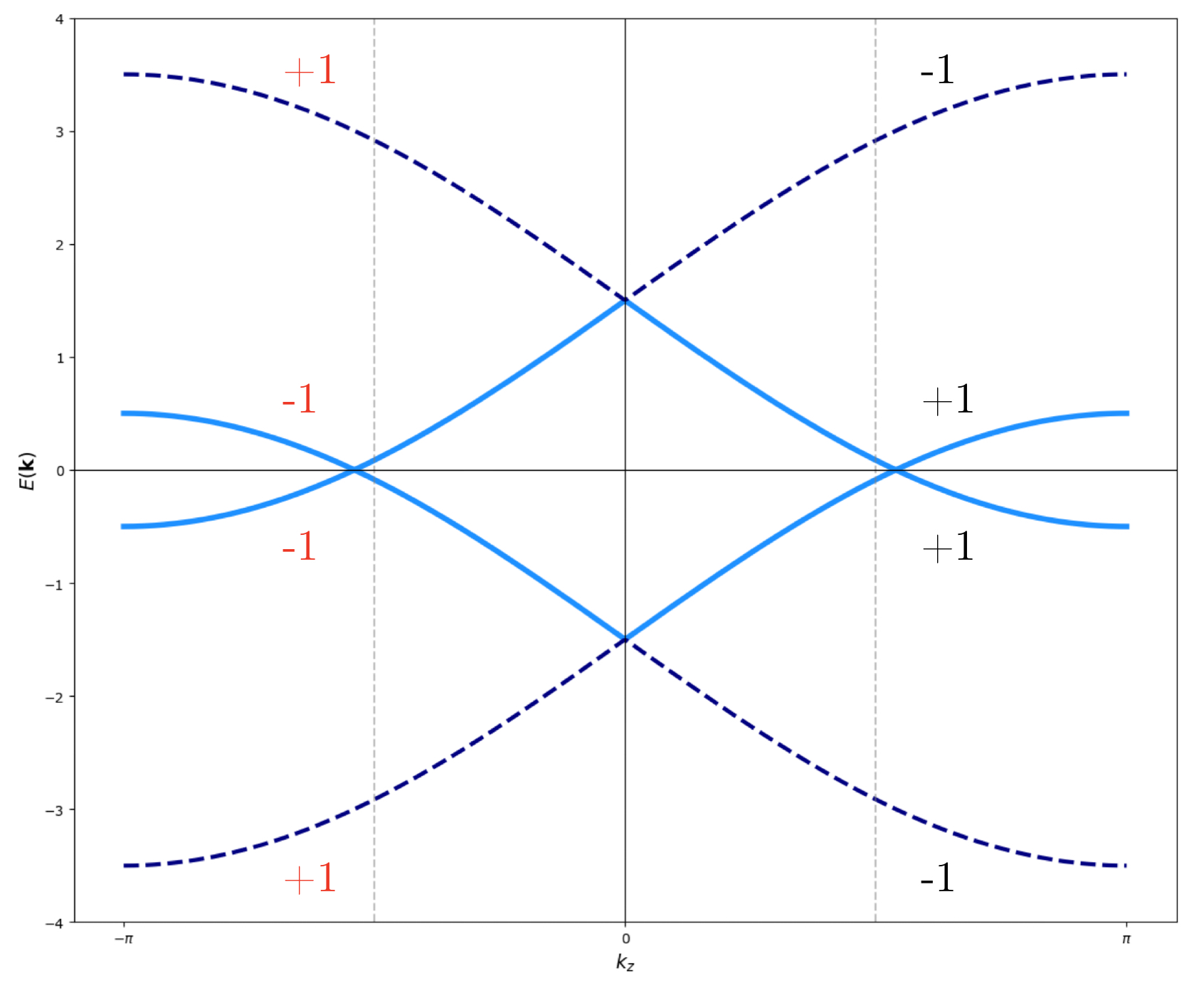}
    \caption{Charges (in {\it arbitrary units}) in the solid blue (dotted purple) band vary from +1 (-1) to -1 (+1) when $\mathbf{k}$ varies from $+\mathbf{k}_{deg}$ to $-\mathbf{k}_{deg}$ following Eq.\ref{eq: Q_modelIII} with $f(\mathbf{k}) = \sin(k_z)$.}
    \label{fig:charge_nodal3}
\end{figure}
The infrared limit shall again be taken with care.
The symmetry charge at crossing points ${\bf k}=\pm k_0{\bf e}_z$ are non-zero.
As ${\bf k} \rightarrow \pm k_0{\bf e}_z$, 
$$Q({\mathbf{k}})=Q(\mathbf{k}=\pm k_0{\bf e}_z).$$
This defines the emergent single Weyl fermion.

On the other hand, as ${\bf k}\rightarrow 0$,
$$Q(\mathbf{k})= k_z^2 \tau^x \otimes \sigma^y+...,$$ where we have muted less relevant terms. Note that in this case, a gap opens up at ${\bf k}=0$; the infrared physics here is instead determined by two crossings at ${\bf k}=\pm k_0{\bf e}_z$.

The charge-conjugation symmetry puts a severe general constraint on the charges that can be assigned to a pair of two states at any $\pm {\bf k}$ points if they can be transformed into each other by a charge conjugation transformation. To illustrate this, we first present two general statements about conserved charges following the charge conjugation symmetry.

Theorem \textbf{A}: At any momentum ${\bf k}$, it is always possible to assign the same charge of $+1$ (up to a multiplcation factor) to any \textit{two} out of four real fermion bands of our interests, disregarding their energy eigenvalues $E_\alpha(k), \alpha =1,2,3,4$ of the real fermion bands.

This property follows directly the two- or even higher-dimensional linear space spanned by symmetry charge operators in the lattice models. It is an outcome of multiple UV completed symmetry charges.

Theorem \textbf{B}: The charges assigned to a pair of states at $\pm {\bf k}$ shall always be of the same magnitude but precisely opposite to each other, if these two states transform into each other under the charge conjugation symmetry. This reflects a very generic aspect of charge conjugation symmetry.
Detailed proof of these two theorems is presented in the appendix F.

The direct consequences of the above general theorems on our discussions of Weyl cones are two-folded.
First, at any two band crossing points such as the one at $k_z=k_0$ discussed above, one can always choose to assign the same charge, say $+1$, to any two bands that are crossing. This follows Theorem \textbf{A}.

Second, the charge conjugation symmetry indicates that there shall be another two-band crossing point at $k_z=-k_0$. The two crossing points at $k_z=\pm k_0$ are precisely related by charge conjugation transformation so that the charge assigned to two crossing bands at $k_z=-k_z$ shall be precisely $-1$ following Theorem \textbf{B}.

To summarize, the charges assigned to two crossing points $\pm k_0$ along the $k_z$ are of the same magnitude but precisely opposite to each other reflecting the generic feature of charge conjugation symmetry and UV completed charge symmetry group.

This feature again indicates that it is always possible to make an emergent Weyl cone of complex fermions out of a pair of real fermion crossing points, as one appears to be particle-like and the other can be exactly attributed to its hole-like counter-part. 

This is also the elementary feature in the charge assignment in the gapless nodal phase in Sect.\ref{part: Model_analysis}. The differences are in the nodal phase models, a pair of two crossing points that are transformed into each other belong the {\em same} two bands so that the charges assigned to two specific bands change their signs when moving across $k=0$ point. In the situation discussed here, the crossing at $k_0$ occur in
a pair of bands that are different from the pair of bands where the crossing occurs at its charge conjugation point$-k_0$.
So that within the same pair of two bands the charge can be of the same sign although at $k=0$ all charges vanish (See Fig.\ref{fig:charge_nodal3}).

\subsection{Model IV: a tQCP with $\delta N_w=4$ subject to T-symmetry breaking actions }

Here we introduce the second tQCP model along the path \textbf{c}). It involves a combination of path {\bf a)} and path {\bf b)}. We follow path {\bf a)} by applying a T-symmetry preserving coupling, which gaps all but the {\it two} double-degenerate real fermion crossings in the $k_x=k_y =0$ axis. It therefore corresponds to a multi-critical T-invariant tQCP with a change of topology $\delta N_W = 4$:

\begin{equation}
\begin{aligned}
    H (\mathbf{k}) = & \sin(k_x) \tau^z\otimes\sigma^x + \sin(k_z) \tau^z\otimes\sigma^z - \sin(k_y) \tau^x \\
& +(2-\cos(k_x)-\cos(k_y)) \tau^y.
\end{aligned}
\label{eq: tQCP4a}
\end{equation}

Then we follow path {\bf b)} by adding a T-symmetry breaking field, which gaps the remaining Weyl cone at $k_z =\pi$:
\begin{equation}
\label{eq:mag_gapping_1}
\begin{aligned}
    H_B (\mathbf{k}) = H (\mathbf{k})+B(1-\cos(k_z)) \sigma^y
\end{aligned}
\end{equation}



After an analysis, we find that for the range of parameter $B\in [0,2]$, we have 4 crossings (in the real fermion formalism), with critical points at $B=1$ and $B=2$.
And if and only if 
\begin{equation}
B>2, 
\end{equation}
we are left with only one Weyl cone at $k_x=k_y=k_z=0$.
So a strong magnetic field in this case indeed induces an emergent single Weyl fermion.

At last, following a similar discussions in Ref.\cite{Zhou1}, we find that the above tQCP model can be mapped into the following lattice chiral fermion model via a standard $Spin(4)$ transformation,

\begin{equation}
\begin{aligned}
    H (\mathbf{k}) = & \sin(k_x) \sigma^x +  \sin(k_z) \sigma^z - \sin(k_y) \tau^y \otimes \sigma^y \\
& +(2-\cos(k_x)-\cos(k_y)) \tau^x \otimes \sigma^y.
\end{aligned}
\label{eq: tQCP4}
\end{equation}

The UV completed symmetry of the lattice model in Eq.\ref{eq: tQCP4} can be easily found. 
The full conserved charge for this model is:
\begin{equation}
    Q(\mathbf{k}) =\sin^2(k_y)\tau^y -\sin(k_y) (2-\cos(k_x)-\cos(k_y))  \tau^x.
\end{equation}
The infrared limit can be taken and one finds $$Q(\mathbf{k}\rightarrow 0) \rightarrow k_y^2 \tau_y+...$$ which represents a generic dispersive nature of the symmetry charges associate with emergent symmetries.


\subsection{Model V: a tQCP with $\delta N_W=8$ subject to T-symmetry breaking actions }

Another model we will present here also involves a combinations of the two paths, Path {\bf a}) and Path {\bf b}). Following path {\bf a)}, we now apply a $k_z$-dependent T-symmetry preserving coupling, which gaps {\it four} of the eight double-degenerate crossings and keeps only 4 of them in the $k_z = 0$ plane. It corresponds to a multi-critical T-invariant tQCP with a change of topology $\delta N_w = 8$:

\begin{equation}
\begin{aligned}
    H_{\text{}} (\mathbf{k}) = & \sin(k_x)\tau^z\otimes \sigma^x +  \sin(k_y) \tau^z\otimes\sigma^z - \sin(k_z) \tau^x \\
& +(1-\cos(k_z)) \tau^y.
\end{aligned}
\label{eq: tQCP8a}
\end{equation}

This model along the path \textbf{c}) can be closely related to what has been studied recently in the literature.
In order to make an explicit contact with the chiral fermion model in Ref.\cite{gioia2025exactchiralsymmetries31d}, here we choose to work with an equivalent lattice model for this gapless tQCP after performing an $Spin(4)$ unitary transformation. The resultant model is

\begin{equation}
\begin{aligned}
    H_{\text{}} (\mathbf{k}) = & \sin(k_x) \sigma^x +  \sin(k_y) \sigma^z - \sin(k_z) \tau^y \otimes \sigma^y \\
& +(1-\cos(k_z)) \tau^x \otimes \sigma^y.
\end{aligned}
\label{eq: tQCP8}
\end{equation}

Then, following path {\bf b}), one can add a T-breaking field, which gaps all but one of the remaining Weyl cones:
\begin{equation}
\begin{aligned}
    H_{B} (\mathbf{k}) = H (\mathbf{k})+ \left( 2 - \cos(k_x)-\cos(k_y) \right) \sigma^y
\end{aligned}
\end{equation}

This model previously proposed in Ref.\cite{gioia2025exactchiralsymmetries31d} presents a single Weyl fermion.
However, contrary to the T-invariant tQCP with $\delta N_w=2$, for the particular choice of TRB field made in Ref.\cite{gioia2025exactchiralsymmetries31d}, there are also band touching at $(k_x,k_y,k_z)=(\pi,0,\pi)$ and $(k_x,k_y,k_z)=(0,\pi,\pi)$.

These touching points are related to quantum critical points of Lifshitz type 
and can disturb the IR dynamics as discussed in Ref.\cite{Yang_Zhou}.
The low energy dynamics in the specific construction therefore are characterized by a mix of two different dynamic critical exponents $z=1,2$. The low energy sector further contains gapless Lifshitz fermions, in addition to a single Weyl fermion.

Nevertheless,
one can remove these gapless Lifshitz points by increasing the TRB field,
\begin{equation}
 H_{B} (\mathbf{k}) = H (\mathbf{k})+  B \left( 2 - \cos(k_x)-\cos(k_y) \right) \sigma^y,
 \label{eq: mag_gapping_2}
\end{equation}
where $B$ is the amplitude of the magnetic coupling and is set to be larger than unity, i.e. 

\begin{equation}
B>1.   
\end{equation}

We have a conserved charge that doesn't vanish at $\mathbf{k} = 0$:
\begin{equation}
\label{eq: charge_maggap2}
    Q(\mathbf{k}) = -\frac{1}{2}\left((1+\cos(k_z))\tau^y - \sin(k_z) \tau^z\right)
\end{equation}

The charge operator flows to the usual chiral charge operator $-\tau^y$ in the continuum limit.
This charge is also similar to the sum of the axial and U(1) charges for the 1D staggered fermion model studied recently in \cite{Chatterjee_2025}. 



\section{Organizing lattice models in a family: An equivalence class of single Weyl fermion models}

\label{part: family}

We now further show that the infrared limit for the models in the different discussions can be organized into a family of Hamiltonians which transform in a $3\otimes 3$ dimensional linear representation of $\text{Spin}(4)$  group \cite{Zhou1} where one of the $SU(2)$ subgroups can be identified as a subgroup of the emergent Lorentz $SO(3,1)$ group.\\

The unitary group acting on these real fermions here is $\text{Spin(4) = SU(2)}\times $SU(2), so as we are considering theories of real fermions, we should consider the action of Spin(4) on the real fermion Hamiltonians studied. the spin(4) algebra is generated by two su(2) algebras that commute with each other, we choose a particular realization of their generators:

Let us name the first su(2) algebra:

\begin{equation}
    \Sigma = \text{Vect}(\tau_x\otimes\sigma_y , \tau_y\otimes\ I , \tau_z\otimes\sigma_y)
    \label{eq:Sigma}
\end{equation}

and the second:

\begin{equation}
    \Theta = \text{Vect}(\tau_y\otimes\sigma_x,I\otimes\sigma_y,\tau_y\otimes\sigma_z)
\label{eq:Theta}
\end{equation}

We can easily check that $\Sigma$ commutes with $\Theta$.

We first study the action of $\Sigma$ on different representations of Spin(4). Let us determine it's action on 
\begin{equation}
    V = (-\tau_z\otimes\sigma_z, -I\otimes\sigma_x,\tau_x\otimes\sigma_z)
\end{equation}

We notice that \begin{equation}
    [\Sigma_i,V_j] = 2i\epsilon^{ijk}V_k
\end{equation}
 and also that \begin{equation}
     \Sigma_iV_j = i\epsilon^{ijk}V_k
 \end{equation}

So the action of $\Sigma$ on $V$ is isomorphic to that of the algebra of $\sigma$-matrices on itself (the adjoint representation or the Pauli matrices algebra), which tells us from our knowledge of the latter that, if we let $U(\phi,\vec{n}) = e^{-i \frac{\phi}{2}\vec{n}\cdot\Sigma} = \cos(\frac{\phi}{2}) -i \sin(\frac{\phi}{2})(\vec{n}\cdot\Sigma)$ and $R(\phi,\vec{n})$ be the rotation of angle $\phi$ and axis $\vec{n}$ with $\lVert \vec{n} \rVert = 1$, then, let $\vec{v} \in \mathrm{R}^3$:

\begin{equation}
    U^\dagger(\phi,\vec{n})(\vec{v}\cdot V)U(\phi,\vec{n}) = (R(\phi,\vec{n})(\vec{v}))\cdot V
\end{equation}

$\Sigma$ has the same commutation relations if we replace $V$ by:

\begin{equation}
    V_2 = (\tau_z\otimes\sigma_x, -I\otimes\sigma_z,-\tau_x\otimes\sigma_x)
\end{equation}

or
\begin{equation}
    V_3 = (\tau_x\otimes I, \tau_y\otimes\sigma_y,\tau_z\otimes I)
\end{equation}
So it has the same action on the vector space they generate.

This tells us that the following Hamiltonian (for p-wave superconductors or superfluids) generates under $\Sigma$ a 3-dimensional representation of SU(2):

\begin{equation}
\mathcal{H}_{p.w.} = \int \frac{d\mathbf{r}}{2}\left[\chi^T(\mathbf{r})\left(\tau_z\otimes(\sigma_x i\nabla_z - \sigma_z i\nabla_x) + \tau_x\otimes\mathbb{I} i\nabla_y\right)\chi(\mathbf{r}) \right]
\end{equation}

Let us now study the action of $\Theta$ on this Hamiltonian:
let $W = (\tau_z\otimes \sigma_z, \tau_x\otimes I,\tau_z\otimes \sigma_x)$, the commutation relations are:
\begin{equation}
    [\Sigma_i,W_j] = -2i\epsilon^{ijk}W_k,
\end{equation} and we also have $$\Sigma_iW_j = -i\epsilon^{ijk}W_k$$

We let
$U'(\phi,\vec{n}) = e^{-i \frac{\phi}{2}\vec{n}\cdot\Theta}$, then we have:

\begin{equation}
    U'^\dagger(\phi,\vec{n})(\vec{v}\cdot W)U'(\phi,\vec{n}) = (R(-\phi,\vec{n})(\vec{v}))\cdot W
\end{equation} 

so 

\begin{equation}
    \begin{aligned}
    U'^\dagger (\phi,\vec{n})\vec{p}\cdot(\vec{v}\cdot W)U'(\phi,\vec{n}) &= \vec{p}\cdot(R(-\phi,\vec{n})(\vec{v}))\cdot W\\ 
    &= (R(\phi,\vec{n})(\vec{p}))\cdot\vec{v}\cdot W
\end{aligned}
\end{equation}

Therefore $\Theta$ acts on the momentum part of the Hamiltonian by a vector rotation, so $\mathcal{H}_{p.w.}$ also transforms in a 3-dimensional representation of SU(2) under $\Theta$. 

The $\Theta$ transformation is just a reformulation of Lorentz symmetry and thus can be offset by a Lorentz transformation.

In the following, we will show these actions of $Spin(4)$ explicitly in a tensor product representation. We will also present two dual copies of such representations and illustrate that all chiral lattice fermion models introduced in Sect.\ref{part: realFermion},\ref{part: Model_analysis},\ref{part: Model_analysisII} can be organized into these two representations and form an equivalence class.

\subsection{Spin(4) action in a tensor product representation}

We introduce $\mathcal{H} = \begin{pmatrix}
    V\\
    V_2\\
    V_3
\end{pmatrix}^T$.

Then a general Hamiltonian can be rewritten as:

\begin{equation}
    H = v^{\alpha}p^{\beta}\mathcal{H}_{\alpha\beta} =: (v \otimes p)\cdot\mathcal{H},
    \label{eq:3x3}
\end{equation}

or its dual copy

\begin{equation}
    {H} = p^{\alpha}v^{\beta}\mathcal{H}_{\alpha\beta} =: (p \otimes v)\cdot\mathcal{H}.
    \label{eq:3x3dual}
\end{equation}
Two copies of the Hamiltonians are related via a $\tau$-$\sigma$ transformation\cite{Zhou1}, which physically represents a charge ($\tau$)-spin ($\sigma$) duality.

Consequently we can rewrite the action of Spin(4): let $U = U_1U_2$ with $U_1$ (resp.$U_2$) being in the SU(2) group generated by $\Sigma$ (resp.$\Theta$); then, with the previous properties, the tensor product  of the representation of Spin(4) on v-space and on p-space writes:

\begin{equation}
    (Uv \otimes Up)\cdot\mathcal{H} = (U_1v \otimes U_2 p)\cdot \mathcal{H}=UHU^{-1}
\end{equation}
which is precisely the action described above.

Therefore, $H$ precisely transforms in the (1,1) representation of $Spin(4)$. $\bar{H}$ is in a dual of this representation.
All together, they form two dual copies of $9$-dimensional 
representations, i.e.

\begin{equation}
 H \in  (1,1)\oplus (\bar{1},\bar{1})
 \label{eq:dualRep}
\end{equation}
where the bars refer to the $\tau-\sigma$ charge-spin dual transformation.

For example, we have, at a superconducting tQCP,
\begin{equation}
    \mathcal{H}_{p.w.}=\left( \begin{pmatrix}
    1\\
    0\\
    0
\end{pmatrix} \otimes (-\vec{p})\right)\cdot\mathcal{H}.
\label{eq:1-1representation}
\end{equation}
It tranforms in one of the $9$-dimensional $(1,1)$ representations of $Spin(4)$. And if we only consider intrinsic transformations, not those acting on momenta, $\mathcal{H}_{p.w.}$ transforms in the 3-dimensional representation of SU(2).

\subsection{The mass operators in lattice models}

Let us further mention that one can easily further include the lattice mass operators in the above Hamiltonian. The mass operators $M({\bf k}) \Gamma_4$ defined in Sect.\ref{part: realFermion}C in Eq.26 for real fermions can only be among the ones in the algebraic group $Spin(4)$. That is,

\begin{equation}
    \Gamma_4 \in \{\Sigma_i, \Theta_j, i, j=1,2,3\}.
\end{equation}
They are simply the generators of two $su(2)$ subgroups of $spin(4)$ defined in Eq.\ref{eq:Sigma},\ref{eq:Theta}. They themselves form a $3\oplus 3$ adjoint representations of $Spin(4)$ group spanning a six dimensional space.

For instance, we can define $m({\bf k})\Gamma_4$ as

\begin{eqnarray}
   m({\bf k})\Gamma_4 =u_\alpha {\Sigma}_\alpha,\alpha=1,2,3.
\label{eq:gamma4}
\end{eqnarray}
It transforms in a three-dimensional representation of the $Spin(4)$ group; more precisely, it will be in the $(1,0)$
representation. 

If we instead define $\Gamma_4$ in terms of $\Theta_\alpha, \alpha=1,2,3$,
it forms a $\tau-\sigma$ dual of what we have above and still form a 
three-dimensional representation, $(0,1)$. Here,
$(0,1)$ forms a $\tau-\sigma$ dual of $(1,0)$. All the mass operators therefore form the following $6$-dimensional representation of the $Spin(4)$ group

\begin{eqnarray}
  \Gamma_4 \in (1,0) \oplus (0,1).
  \label{eq:massRep}
\end{eqnarray}

Time reversal symmetry will put further constraints on the choice of the mass operators in Eq.\ref{eq:gamma4}.

\subsection{The time reversal symmetry breaking fields}

The operator associated with the T-symmetry break actions was introduced in Sect.\ref{part: realFermion}C in Eq.30. These operators also have to be 
among the six operators of the $Spin(4)$ algebraic group and need to be in a dual represention of the mass operators so to satisfy the algebaric relations stated in Eq.31. For the mass operators defined in Eq.\ref{eq:gamma4}, 

\begin{eqnarray}
   B({\bf k})\Gamma_5 =w_\alpha {\Theta}_\alpha,\alpha=1,2,3.
\label{eq:gamma5}
\end{eqnarray}

Not surprisingly, they form a dual to the representation in Eq.\ref{eq:massRep}. That is all the $T$-symmetry breaking actions also form a six dimensional representation of $Spin(4)$, 
\begin{eqnarray}
 \Gamma_5 \in  (0,1) \oplus (1,0).
  \label{eq:magRep}
\end{eqnarray}

Note that in the practical constructions, $\{ u_\alpha, w_\alpha, \alpha=1,2,3\}$ are all constrained by $\{ v_\alpha, \alpha=1,2,3\}$ and the protecting symmetries. They can be uniquely set by the choice of $\{v\}$, so they don't lead to extra dimensions in the Hamiltonian manifold. 

Therefore, the whole lattice model structure can be simply encoded in the substructure of $H$, the effective Hamiltonian defined in Eq.\ref{eq:3x3},\ref{eq:3x3dual}.
In conclusion,
the lattice chiral fermions with a single weyl cone can therefore form an equivalence class defined by Eq.\ref{eq:3x3},\ref{eq:3x3dual},\ref{eq:dualRep},\ref{eq:gamma4},\ref{eq:massRep},\ref{eq:magRep},\ref{eq:1-1representation}. 

It is worth noting that the most general lattice models are defined in 
a $15\oplus 15$-dimensional manifold. The manifold is spanned
by a $15$-dimensional representation of 
a $Spin(4)$ group and its dual as defined below

\begin{eqnarray}
 H &\in & (1,1)\oplus (1,0)\oplus(0,1)  \mbox{  or}  \nonumber \\
 H &\in & (\bar{1},\bar{1})\oplus (0,1)\oplus(1,0).
 \label{eq:totalRep}
\end{eqnarray}

However, all the chiral fermion lattice models with a single Weyl fermion, after all symmetry constraints along with the Clifford algebra are imposed can be uniquely encoded in two dual copies of the much smaller 9-dimensional $(1,1)$ representations defined in Eq.\ref{eq:3x3},\ref{eq:3x3dual}. So effectively, the $15\oplus 15$ dimensional manifold can be projected into a $9\oplus 9$ dimensional sub-manifold where chiral fermion lattice models with single Weyl fermions discussed here all belong to.

Evidently, in the infrared limit where the single Weyl fermion emerges, the Hamiltonians simply also form two charge-spin dual copies of $3\otimes3$ dimensional representations of the $Spin(4)$ group,
 $$  H  \in  (1,1)\oplus (\bar{1},\bar{1})$$
where the bars refer to the $\tau-\sigma$ or charge-spin dual transformation.
In a sense, its infrared substructures contains unique information of the whole family of lattice models. And all models form an equivalence class either defined by a tQCP in the DIII class topological superconductors or by its dual, a nodal point phase, i.e. they are isomorphic.

\section{Generalized UV completed symmetries and symmetry charge operators}
\label{part: Charge}

This section is devoted to a detailed analysis of the UV completed  symmetry groups or charges in our models. Here we study the algebras and constraints of the commuting space of the Hamiltonian, which allows us to define and identify symmetry charges in different models.

Symmetry charges or charge-operators have distinct structures and span distinct spaces in models that are either non-degenerate along path {\bf b}), path \textbf{c}), i.e. the models in Eq.\ref{eq:H_p-wave},Eq.\ref{eq: tQCP2B},Eq.\ref{eq:mag_gapping_1},Eq.\ref{eq: mag_gapping_2} or degenerate along path {\bf a}), i.e. the model in Eq.\ref{eq: tQCP}.
We show that the space of conserved charges is larger in the latter. 

Below We characterize the algebraic structure of these symmetries, discuss their non-compact nature, and illustrate how they act non-locally on lattice degrees of freedom. 

In the following discussion, we characterize the structure of the space of possible exact charges in the context of $T$-symmetry breaking 
non-degenerate models (path {\bf b}), path {\bf c}) ) or $T$-symmetric degenerate models (path {\bf a)}).

\subsection{Charges for ${\cal T}$-symmetry breaking Hamiltonians}

For the first case, consider for example Hamiltonians in Eq.\ref{eq:H_p-wave}, \ref{eq: tQCP2B},\ref{eq:mag_gapping_1},\ref{eq: mag_gapping_2}. At nearly all points or any generic point in the Brillouin zone, their eigenvalues are non-degenerate, this indicates that any charge operators that commute with $H$ must obey:

\begin{equation}
    Q(\mathbf{k}_0) = \text{Diag}(s_1,s_2,s_3,s_4)
\end{equation}
in the basis that diagonalizes $H$.

This spans a 4 dimensional vector space to which $H$ itself and $I$ belong. So there are only two linearly independent rays left to construct possible symmetry charges if we project away the subspace spanned by $\{ H, I\}$. See Appendix F for the detailed constructions.

The two linearly independent symmetry charges can have the general structures,  
    $$Q_{I} (\mathbf{k}_0) = \text{Diag}(s_{I1},s_{I2},s_{I3},s_{I4}),$$
$$Q_{II} (\mathbf{k}_0) = \text{Diag}(s_{II1},s_{II2},s_{II3},s_{II4})$$
in the basis that diagonalizes $H$ at a specific momentum $\mathbf{k}_0$.
Furthermore, they commute with each other at a given momentum $\mathbf{k}_0$,

\begin{equation}
    [Q_I(\mathbf{k}_0), Q_{II}(\mathbf{k}_0]=0.
\end{equation}
Locally at a given $\mathbf{k}_0$ point, these symmetry charges carry the same algebra as $u(1)\oplus u(1)$.

However, generally charge operators in 
\begin{equation}
\{ Q_{I}({\bf k}), Q_{II}({\bf k}); {\bf k} \in \mathbb{T}^3 \}
\end{equation}
 defined at different points of $\mathbb{T}^3$ are $\mathbf{k}$- dispersive and also contain the null operator. And these charges defined above form rays spanning $\mathbb{R}\otimes \mathbb{R}$ (see below) rather than in a compact manifold.

These general observations suggest the non-compactness of the UV completed symmetry charges.
And the non-compactness of symmetries were emphasized in general studies of chiral fermions as well as in the explicit construction of lattice chiral fermions\cite{Friedan:1982nk,gioia2025exactchiralsymmetries31d} and they also appear in all of our studies here.

However, in the current studies of 3d single Weyl fermion lattice models here we further find that the dimension spanned by charge operators is {\it two}, higher than one. This suggests that the symmetry charges are not unique and there are multiple non-compact, non-abelian symmetry charges {\it even for a single Weyl fermion and their lattice models}. This is a surprising new feature that hadn't been investigated before, and needs to be fully explored in the future.

But it is also because of the two-dimensional linear space, one can always freely choose to work with a symmetry charge of a specific property. 
Following the discussions in Appendix F, it is indeed always possible to assign a single identical charge to any two different eigen states of $H$. Here we can apply this idea more explicitly in a concrete limit.

For example, we can zoom in and focus on some given $\mathbf{k}_0$ where 
$H(\mathbf{k})=\text{Diag}(1,0,0,-1)$ (in {\it arbitrary units}).
The two symmetry charges then can have the simple form $Q_I(\mathbf{k}) = \text{Diag}(0,1,-1,0)$ and $Q_{II} (\mathbf{k}) = \text{Diag}(-1,1,1,-1)$ in the basis that diagonalizes $H(\mathbf{k})$.

A charge operator must therefore be of the form (up to a unitary transformation): 
\begin{equation}
    Q(\mathbf{k}) = f(\mathbf{k})Q_{II}+g(\mathbf{k}) Q_{I}(\mathbf{k})+h(\mathbf{k})H(\mathbf{k})+l(\mathbf{k})I
\end{equation}
near this momentum of $\mathbf{k}_0$ where $f,g,h,l$ are smooth functions.

Let us now apply this idea to the nodal phase: only the two middle bands cross to form Weyl points.
At these Weyl points $\mathbf{k}_{deg}$, one verifies that $H(\mathbf{k})=\text{Diag}(1,0,0,-1)$ (in {\it arbitrary units}).

The charge operator should attribute the same value of charges to the two middle bands to reproduce the \textit{chiral charge} in the IR. This then further restricts our choice of a charge operator, leading us to choose only operators in the ray $\mathbb{R}\text{Diag}(-1,1,1,-1)$. In terms of $f,g,h,l$, it means that: $f(\mathbf{k}_{deg}) = \pm1$ and $g(\mathbf{k}_{deg}),h(\mathbf{k}_{deg}),l(\mathbf{k}_{deg}) = 0$.

Of course this constraint only applies in the vicinity of the crossings, but this motivates our choice (Eq.\ref{eq: Q_pwave}) as a suitable charge operator. 

Moreover, as discussed in Sect. \ref{part: realFermion}, some further constraints need to be further applied to these functions. The charge has to obey the charge conjugation symmetry $Q(\mathbf{k}) = -Q^*(-\mathbf{k})$, which in the case of charge (Eq.\ref{eq: Q_pwave}) imposes a constraint on the function $f$. The multiplying function $f$ has to be an odd function of $\mathbf{k}$, as for example $\sin(k_z)$. As depicted in FIG. \ref{fig:charge_nodal2}, it especially forces the attribution of an opposite charge to the two crossings that are related by charge conjugation symmetry, which is what we have wanted in the first place, as we are looking for a single complex Weyl cone.

It is worth emphasizing that the charge has to be 0 either at $\mathbf{k}=0$ or at one of the 8 time-reversal invariant points. 
It forces the symmetry group to be non-compact (because the spectrum necessarily has 2 values of irrational ratio) and therefore the charge to be non-on-site.

The case of the tQCP, which can be viewed as an extreme case of the Nodal phase (FIG. \ref{fig:Nodal_vs_tQCP2}), can actually evade the constraint of $S(0) = 0$ because the two crossings happen at the same point (see the previous Section). Two crossings at the same point of the momentum space transform into each other under the time reversal transformation; so two finite but opposite charges, $\pm 1$ can be assigned to each crossing. 

The finding that the charge for a single Weyl fermion cannot be quantized can be demonstrated in a more general fashion. This is done in Appendix \ref{part: charge cannot be quantized}.

\subsection{Charge for $\mathcal{T}$-symmetry invariant Hamiltonian}

In the second case, the Hamiltonian is two-fold degenerate at every point in the Brillouin zone, which is typically the case when there is Time-Reversal Invariance like in the model (Eq.\ref{eq: tQCP}). This allows us to define a much wider variety of symmetry charges or operators.

In particular, in a basis that diagonalizes $H(\mathbf{k})$, every operator of the following form commutes with the Hamiltonian:

\begin{equation}
    Q(\mathbf{k}_0) = \begin{pmatrix}
    A&0\\
    0&B
\end{pmatrix}
\end{equation}
With A and B being $2\times2$ hermitian matrices acting on the degenerating subspace of $H$. That is $S$ has a block-diagonal structure.

The symmetry charges span an 8 dimensional vector space to which H and I belong, so there are at least 6 linearly independent charges to look for. In terms of the sigma and tau matrices in the basis that diagonalizes H, our charge takes the following form:

\begin{equation}
    Q(\mathbf{k}) = \vec{a}(\mathbf{k})\cdot (I\otimes\sigma) + \vec{b}(\mathbf{k})\cdot(\tau^z\otimes\sigma) + c(\mathbf{k})H(\mathbf{k})+d(\mathbf{k})I
\end{equation}
with $\vec{a}(\mathbf{k}),\vec{b}(\mathbf{k})\in \mathbb{R}^3$ and $$\tau^z\otimes\sigma := (\tau^z\otimes\sigma^x,\tau^z\otimes\sigma^y,\tau^z\otimes\sigma^z)^T.$$

As opposed to the first case, even at a given momentum \textbf{k}, the charges here do not always commute with each other.
But they still form a closed algebra: if $A$ and $B$ commute with H, trivially $[A,B]$ also commutes with H. 

One can even show that the elements in 
$$\{\vec{a}(\mathbf{k})\cdot (I\otimes\sigma) + \vec{b}(\mathbf{k})\cdot(\tau^z\otimes\sigma)|\quad\vec{a}(\mathbf{k}),\vec{b}(\mathbf{k})\in \mathbb{R}^3\}$$
form a closed algebra, since none of the commutation relations between those elements involve $H$ or $I$. Moreover, from its structure we see that the algebras of symmetry charges is isomorphic to $\text{spin(4)}=\text{su(2)}\oplus \text{su(2)}$ algebra, in contrast to the $u(1)\oplus u(1)$ algebra of the first case.

However, as in the first case, all the six linearly independent charge operators are dispersive in the momentum space $\mathbb{T}^3$ and can also contain the null operator.
Charge operators in
\begin{equation}
\{ Q_{\alpha}({\bf k}), {\bf k} \in \mathbb{T}^3; \alpha =1,...,6 \} 
\end{equation}
therefore span a six-dimensional linear space and they form rays in 
$\mathbb{R}^3\otimes \mathbb{R}^3$, i.e. in a non-compact manifold. 
Again, in general we expect symmetries are non-compact and, in this case with the time reversal-symmetry, are also non-abelian. The action of symmetry transformation is non-local.

\subsection{Action of the charge on an on-site operator}

The action of the symmetry operator $e^{-i\lambda Q}$ on an on-site operator $\chi_i$ is described by the ordinary differential equation:

\begin{equation}
    i\frac{\partial\chi_{\bf i}}{\partial\lambda} = [Q,\chi_{\bf i}]
\end{equation}

Let us examine for example the charge for the tQCP model (Eq.\ref{eq: tQCP}):
\begin{equation}
    \begin{aligned}
    \mathcal{Q} = \sum_{\bf i}&[\left(\frac{-1}{2}(\sum_{\hat{b}} \chi_{{\bf i}+\hat{x}+\hat{b}}+\chi_{{\bf i}+\hat{x}-\hat{b}} ) +3\chi_{{\bf i}+\hat{x}}\right) I\otimes\sigma^z \chi_{\bf i} \\&- \left(\chi_{\bf i}- \chi_{{\bf i}+2\hat{x}}\right) \tau^x\otimes\sigma^y\chi_{\bf i}]+h.c.
\end{aligned}
\end{equation}

this gives us:

\begin{equation}
    \begin{aligned}
    i\frac{\partial\chi_{\bf i}}{\partial\lambda} =& \left(\frac{-1}{2}(\sum_{\hat{b}} \chi_{{\bf i}+\hat{x}+\hat{b}}+\chi_{{\bf i}+\hat{x}-\hat{b}} ) +3\chi_{{\bf i}+\hat{x}}\right) I\otimes\sigma^z\\&+I\otimes\sigma^z\left(\frac{-1}{2}(\sum_{\hat{b}} \chi_{{\bf i}-\hat{x}-\hat{b}}+\chi_{{\bf i}-\hat{x}+\hat{b}} ) +3\chi_{{\bf i}-\hat{x}}\right)\\&- \left(\chi_{\bf i}- \chi_{{\bf i}+2\hat{x}}\right) \tau^x\otimes\sigma^y - \tau^x\otimes\sigma^y\left(\chi_{\bf i}- \chi_{{\bf i}-2\hat{x}}\right) 
\end{aligned}
\end{equation}

So we see a direct coupling with the neighboring sites, which is a sign that the symmetry is non-local: as $\lambda$ grows, the weight of the operator distributes to still further lattice sites.

\section{Conclusions}
\label{part: Conclusion}

To summarize, we have put forward a general approach towards lattice chiral fermions starting with a gapped fermionic SPT which is time reversal invariant but breaks the charge $U(1)$ symmetry, i.e. a topological superconductor protected by the time reversal symmetry. 

Our main results are summarized below:

I)
We have explored thoroughly three main paths toward lattice chiral fermions starting with a DIII class fully gapped superconducting SPT.

Path \textbf{a}) preserves the time reversal symmetry but requires generic topological quantum critical points across which the change of topological invariant takes its fundamental value of  $\delta N_w=2$. 

Path \textbf{b}) involves a magnetic coupling that breaks the time reversal symmetry but doesn't require any fine tuning and are more robust.

Path \textbf{c}) is a hybrid of Path \textbf{a}) and Path \textbf{b}) and can also be applied to the construction of a lattice fermion model. The hybrid approach involves tQCPs across which the change of topological invariant takes a value of an integer multiple of the fundamental values, i.e. $\delta N_w=2n$, $n=1,2,..$ but further subject to a time reversal symmetry breaking field. 

The tQCPs with n being larger than two can be also associated to multi-critical tQCPs as they are described by multiple copies of the fundamental theory of tQCP with $\delta N_w=2$ studied in Ref.\cite{Zhou1,Zhou2,Zhou3}.

This unified practical approach can be applied to reproduce emergent single Weyl cones and associated IR symmetries previously obtained in these gapless superfluids, either as stable phases or quantum critical points. We have further explored the UV completion of those IR symmetries.

III)
The family of chiral lattice models with single Weyl fermions that have been known to us so far can be encoded in two dual copies of $3\otimes 3$-dimensional linear representations, or more precisely can be encoded in two dual $(1,1)$ representations of a Spin(4) group where one of the $SU(2)$ subgroups can also be identified as a subgroup of the emergent Lorentz group $SO(3,1)$.
They form an equivalence class. In the infrared limit, they are isomorphic to either a tQCP with time reversal symmetry or its dual, a $T$-symmetry breaking superconducting nodal point phase.

IV) Furthermore, it is further utilized to pinpoint an intimate connection between a three spatial dimensional lattice chiral fermion model recently constructed in \cite{gioia2025exactchiralsymmetries31d}, with exact non-on-site non-compact symmetries, and real fermions that naturally appear in gapless superfluids or superconductors studied previously. We found that the proposal in the above reference is equivalent to applying a $T$-breaking action  to a multiple-critical $T$-symmetry protected tQCP where the change of topologies, $\delta N_w=8$. This quadruples the more generic tQCP fundamental value of $\delta N_w=2$. This recent construction above is one example along Path \textbf{c}), i.e. a hybrid approach of
Path \textbf{a}) and Path \textbf{b}).

V) Finally, we also illustrate the differences in the construction of UV completed symmetries along different paths. For a generic $T$-symmetric tQCP along path a), the conserved-charge operators span a six-dimensional linear space while for a $T$-symmetry breaking gapless state, operators span a two-dimensional linear space instead.

A few open questions that we plan to further look into in the future studies:

i) The conserved charge operators can span either a two-dimensional or a six-dimensional linear space at each ${\bf p}$-point in the 3-torus $\mathbb{T}^3$. The conserved charge construction in this article as well as in previous chiral fermion discussions doesn't appear to be unique. The role played by these large linear spaces and how they can be systematically and explicitly patched together in the space of $\mathbb{T}^3$ remain to be further studied. 

ii) The IR theory of the lattice model has an emergent symmetry that is subject to a T'Hooft anomaly. This hinted an intimate connection between the IR physics and the surface of a 4D SPT where gapless surfaces can be probed by gauge anomalies\cite{Zhou3}. However, the UV physics and symmetry are very different. The 3D lattice has a non-compact $\mathbb{R}$-symmetry due to the continuous spectrum of the conserved charges while the 4D SPT has a compact $U(1)$ symmetry with charge well quantized. 

This poses a fundamental question: to what extent can a holography approach be meaningful and useful for the understanding of lower dimensional gapless states? 
How relevant are the UV completed symmetry groups at the IR and is there a UV-IR mixing which appears in the low energy sector?  

iii) At last, one can also ask how to generalize these discussions to interacting models? Can these non-compact non-local symmetries be broken
spontaneously and how are they different from the standard SSB phenomena of typical compact groups?

We acknowledge the support of an NSERC(Canada)
Discovery Grant under contract No RGPIN-2020-07070. G.M. acknowledges support from the Institut Philippe Meyer.
One of the authors(FZ) acknowledges a discussion with Ryan Thorngren on lattice chiral fermions. This project is in part to fulfill the ENS Master's internship requirements
(G.M.).

\appendix

\section{Mathematical demonstration of the No-go theorem}
\label{A_proof no go}

Let \( M = \mathbb{T}^3 \times \text{Herm}(n) \) and \( H_0: \mathbb{T}^3 \rightarrow \text{Herm}(n) \).

We set \( S := \{ (p, H(p)) \mid p \in \mathbb{T}^3 \} \), which is clearly a 3-dimensional submanifold of \( M \).

We want to study the intersection of \( S \) with the subset \( \Sigma \) of \( \text{Herm}(n) \) consisting of exactly 2-fold degenerate Hermitian matrices.

We will see that $X = \mathbb{T}^3\times\Sigma$ is a submanifold of $M$, and that although it is not closed (but $S$ is) it allows for a similar definition and properties of the intersection number as in \ref{proof_nogo1}.

Let us first describe \( \Sigma \):

Let \( H \in \Sigma \). Then there exist \( \lambda, \lambda_3, \ldots, \lambda_n \in \mathbb{R} \) and \( U \in U(n) \) such that:

\[ H = U \Lambda U^{-1}, \quad \text{where } \Lambda = \operatorname{Diag}(\lambda, \lambda, \lambda_3, \ldots, \lambda_n) \]

\( U(n) \) is of dimension \( n^2 \), but \( H \) is invariant under the transformation:

\[ U \rightarrow U 
\begin{bmatrix}
U(2) & & \\
 & U(1) & \\
 & & \ddots & \\
 & & & U(1)
\end{bmatrix} \]

Thus, we can consider the quotient \( U \in \tilde{U} = U(n) / (U(2) \times U(1) \times \cdots \times U(1)) \), which has dimension \( n^2 - (4 + (n-2)) = n^2 - n - 2 \).

The matrix \( \Lambda \) has \( n - 1 \) independent parameters (since we fix one eigenvalue for degeneracy).

Now, it is straightforward to verify that the map
\[
\begin{cases}
\tilde{U} \times \mathbb{R}^{n-1} \rightarrow \text{Herm}(n) \\
(U, \Lambda) \mapsto U \Lambda U^{-1}
\end{cases}
\]
is an immersion.

Therefore, \( \Sigma \) has dimension \( (n^2 - n - 2) + (n - 1) = n^2 - 3 \).

Similarly, the set of Hermitian matrices of k-fold degeneracy has dimension $n^2-k^2+1$, so for $k \geq 3$, the codimension is greater than 8, which will play a role in a moment.

We defined \( X = \mathbb{T}^3 \times \Sigma \). Both \( X \) and \( S \) are orientable manifolds of complementary dimension, and as discussed in the main text, the intersection number \( \text{int}(S, X) \) counts the sum of contributions from all Weyl cones in the Brillouin zone.

Consider a homotopy \( F \colon [0,1] \times \mathbb{T}^3\rightarrow M \) defined by:
\[
F(s,\mathbf{k}) = (1-s)\left(\mathbf{k},H(\mathbf{k})\right) + Q(\mathbf{k},A),
\]
where
\[
A = \text{Diag}(1,2,\cdots,n)
\]

($F$ is continuous because S is compact).

Since \( F \) is a homotopy, and we will show that homotopies preserve intersection numbers just as in the closed submanifold case \cite{1974Dt/b}, we have:
\[
\text{int}(S, X) = \text{int}(\mathbb{T}^3 \times \{A\}, X) = 0,
\]
Because \( A \) has distinct eigenvalues and so does not intersect with $X$.

\section{Intersection number is invariant by homotopy}
\label{Part_A_Intersection}

The proof is a direct adaptation of the one in \cite{1974Dt/b} (Chapter 3 Paragraph 3), the only missing assumption here is that $X$ is not a closed submanifold. In \cite{1974Dt/b} this is used to prove that if $F$ is a homotopy between transversal submanifolds, $F^{-1}(X)$ is a compact oriented one-manifold. We will thus prove that, in our case, we can always choose $F$ so that $F^{-1}(X)$ is compact.

Since $[0,1]\times T^3$ is compact, we only need to prove that $F^{-1}(X)$ is a closed set. As $F$ is continuous, this is equivalent to prove that we can choose a closed neighborhood $V$ of the image of $F$ such that $V\cap X$ is closed, or equivalently that $F$ avoids any points of $\partial X$, the set $\Sigma_{>2}$ of $\geq3$ degeneracies. 

We will use the fact that $\Sigma_{>2}$ is of codimension greater than 6 (greater than 8 around $\geq 3$-fold degeneracies, and 6 around double 2-fold degeneracies):

Assume that some point $x = (\lambda_0,\mathbf{k}_0)\in [0,1]\times \mathbb{T}^3$ is in $\Sigma_{>2}$: then from dimensional considerations there is always a vector $\vec{v}$ orthogonal to $T_x(T^3 \times \Sigma_{>2})$ and $T_xF([0,1]\times \mathbb{T}^3)$ to "push" this point out of $\Sigma_{>2}$ (by using a bump function $B$ that is non-zero only in the vicinity of $0$ for example: $\tilde{F}(\lambda_0+\lambda, \mathbf{k}_0+\mathbf{k}) = F(\lambda_0+\lambda, \mathbf{k}_0+\mathbf{k}) + B(\lambda, \mathbf{k})\vec{v}$), by the orthogonality property and pythagoras theorem, this strictly adds distance between $S$ and $\Sigma_{>2}$ (Fig. \ref{fig:int-inv})

\begin{figure}[h]
    \centering
    \includegraphics[width=0.8\linewidth]{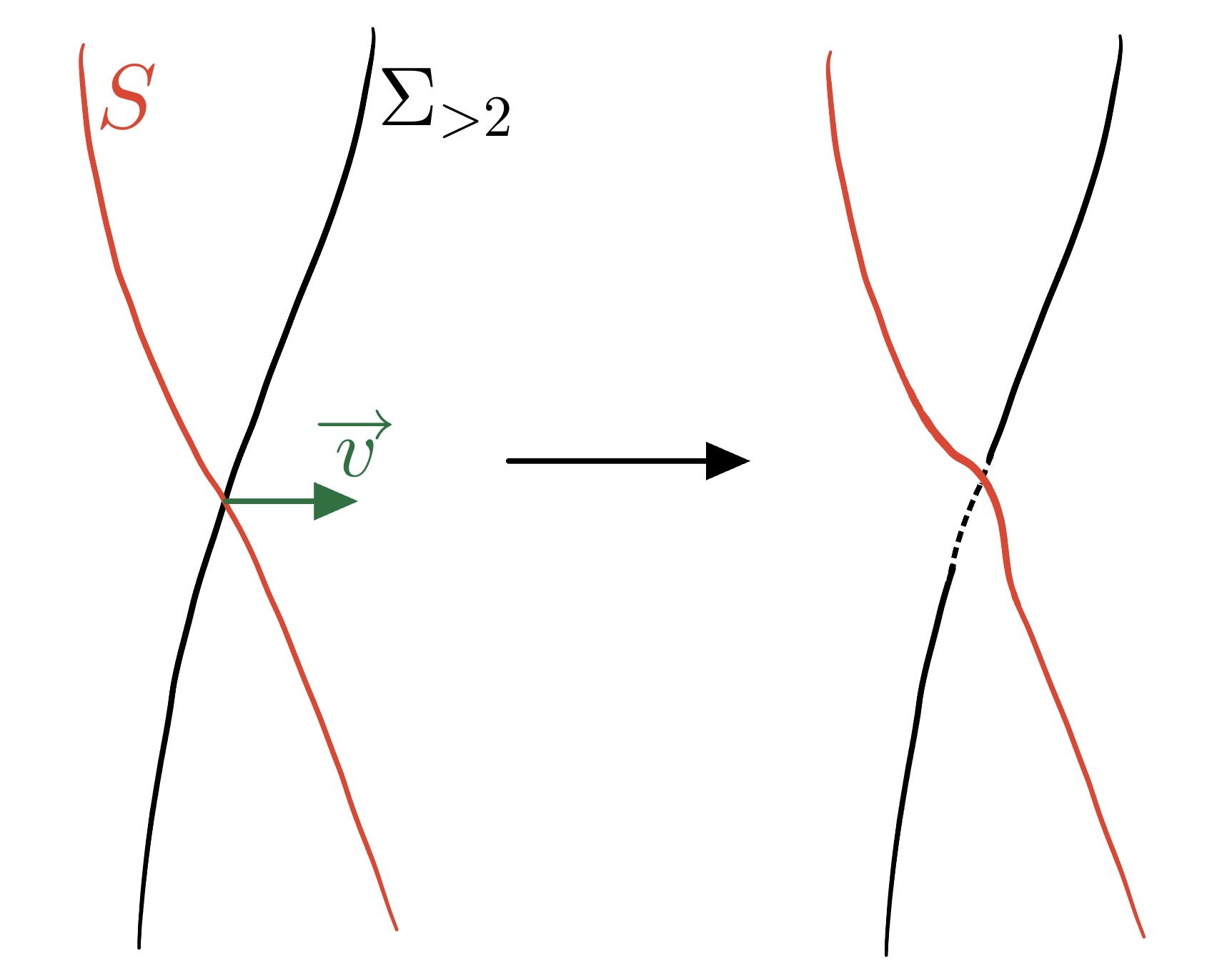}
    \caption{If $S$ and $\Sigma_{>2}$ intersect (left), we can find an orthogonal vector $\vec{v}$ to "push" $S$ away from $\Sigma_{>2}$ (right).}
    \label{fig:int-inv}
\end{figure}

This proves the invariance by homotopy and finishes this new proof of the Nielsen-Ninomiya theorem.

\section{Side results of our proof}

It can be noted that if instead of taking $\Sigma$, we choose $\Sigma_{[k,k+1]}$, the set of matrices with degeneracy only for band k and k+1, $\Sigma_{[k,k+1]}$ is still a submanifold, and the reasoning is the same. It thus gives us a more precise doubling theorem: the sum of chiralities of every Weyl points at crossing between band k and k+1 is 0. This could not be predicted from the proof by homotopy \cite{NIELSEN198120}.

\section{Chiral charge for a single Weyl fermion cannot be quantized}
\label{part: charge cannot be quantized}

This paragraph proves that even in the real fermmion formalism, no quantized chiral charge for a single Weyl fermion can be defined. 

Imagine we want to define a quantized charge (with values only 1 and -1) for the model corresponding to $F(0)$. For some Weyl point, let's say the charge is $+1$. Then, since the only values available are $\pm1$, there is only one way to define the charge (at least in a neighborhood of the Weyl points, because we don't allow 3-fold degeneracies) as a continuous fonction of $\lambda \in [0,1]$. Now \ref{thm: NN} tells us that at some $\lambda_0$, the Weyl point has to disappear with one other Weyl point of opposite chirality, so at that $\lambda_0$, the charge must also be $+1$ for the other Weyl point too. As the charge cannot change sign during the deformation, we deduce that that other Weyl point was also at charge $+1$ at parameter $\lambda=0$. This proves that no quantized charge for single Weyl fermion can be defined on a lattice.

\section{Applying the Schrieffer-Wolff transformation}
\label{appendix:SW}

We write the total Hamiltonian as:
\begin{equation}
    H(k) = H_0(k) + V_k
    \label{eq:SW_initial}
\end{equation}
where $H_0(k)$ is block-diagonal and $V_k$ contains the off-diagonal perturbations. 

The Schrieffer-Wolff transformation performs a unitary change of basis:
\begin{equation}
    H'(k) = e^{iS_k} H(k) e^{-iS_k}
\end{equation}
where \( S_k \) is a Hermitian operator chosen such that the transformed Hamiltonian \( H'(k) \) is block-diagonal to the desired order in perturbation theory.

Expanding using the Baker-Campbell-Hausdorff formula gives:
\begin{equation}
    H'(k) = H_0(k) + \frac{i}{2} [S_k, V_k] + \mathcal{O}(V_k^3)
    \label{eq:SW_result}
\end{equation}
where $S_k$ is the Hermitian generator of the transformation, defined by the condition:
\begin{equation}
    i[H_0(k), S_k] = V_k.
\end{equation}

Moreover, the Hamiltonian satisfies the symmetry condition:
\begin{equation}
    H(-k - k_{\mathrm{deg}}) = -H^T(k + k_{\mathrm{deg}})
    \label{eq:SW_symmetry}
\end{equation}

This implies that if \begin{equation}
    H_0(k + k_{\mathrm{deg}})
\end{equation} is block-diagonal, then so is 
\begin{equation}
    H_0(-k - k_{\mathrm{deg}}).
\end{equation}Furthermore, if \begin{equation}
    [S_{k + k_{\mathrm{deg}}}, V_{k + k_{\mathrm{deg}}}]
\end{equation} is block-diagonal, the same holds for 
\begin{equation}
    [S_{-k - k_{\mathrm{deg}}}, V_{-k - k_{\mathrm{deg}}}].
\end{equation} Consequently, the Schrieffer-Wolff transformation block-diagonalizes the Hamiltonian in the vicinity of both degeneracy points simultaneously.

In the vicinity of $+\mathbf{k}_{deg}$, the Hamiltonian takes the form:
\begin{equation}
    H(\mathbf{k}+\mathbf{k}_{deg}) = e^{iS_{\mathbf{k}+\mathbf{k}_{deg}}}\begin{pmatrix}
        \vec{A}(\mathbf{k}) \cdot \sigma & 0\\
        0 & *
    \end{pmatrix}e^{-iS_{\mathbf{k}+\mathbf{k}_{deg}}}
\end{equation}
with $\vec{A}(0)=0$.

If we linearize around $\mathbf{k}_{deg}$, we then have :
\begin{equation}
    \vec{A}(\mathbf{k}) = \mathbf{k}_\kappa V^\kappa
\end{equation}
for some invertible matrix $V$.
We can then project out the two blue dotted bands.

\section{Charge assignment}

We want to first prove that the spectrum for the Hamiltonian and the exact charge is flipped under charge-conjugation, and in the same way for the Hamiltonian and for the charge. Then we explore the two possibilities for the charges of an almost-everywhere non-degenerate Hamiltonian. And finally we illustrate this in some concrete examples.

\subsection{The spectrum is flipped under charge-conjugation}

We assume that we have a charge-conjugation symmetric Hamiltonian $H$ and an exact charge $Q(\mathbf{k})$ so that $[Q(\mathbf{k}),H(\mathbf{k})] = 0$ for all $\mathbf{k}$. The charge-conjugation symmetry is defined by an anti-unitary transformation
$\mathcal{C}$, with $\mathcal{C}^2=1$. It leads to the following relations,

\begin{subequations}
\begin{align}
    \mathcal{C} H(\mathbf{k})\mathcal{C}^{-1} &= -H(-\mathbf{k})\\
    \mathcal{C}Q(\mathbf{k})\mathcal{C}^{-1} &= -Q(-\mathbf{k})
\end{align}
\end{subequations}

Let us first prove that the Hamiltonian and Charge speCtrum is flipped under Charge-Conjugation symmetry:

Let $\ket{\alpha}_\mathbf{k}$ be an eigenstate of $H(\mathbf{k})$ and of $Q(\mathbf{k})$ (since they commute, we can always find such an eigenvector for each eigenvalue), we have:

\begin{subequations}
\begin{align}
H(\mathbf{k})\ket{\alpha}_\mathbf{k} &= E_\alpha(\mathbf{k}) \ket{\alpha}_\mathbf{k}\\
Q(\mathbf{k})\ket{\alpha}_\mathbf{k} &= s_\alpha(\mathbf{k}) \ket{\alpha}_\mathbf{k}
\end{align}
\end{subequations}
Then:
\begin{equation}
    H(-\mathbf{k})\mathcal{C}\ket{\alpha}_\mathbf{k} = -\mathcal{C}H(\mathbf{k})\ket{\alpha}_\mathbf{k} = -E_\alpha(\mathbf{k}) \mathcal{C}\ket{\alpha}_\mathbf{k}
\end{equation}
So $\mathcal{C}\ket{\alpha}_\mathbf{k} =\ket{\alpha}_{-\mathbf{k}}$ is an eigenvector of $H(-\mathbf{k})$ with eigenvalue $E_\alpha(-\mathbf{k}) = -E_\alpha(\mathbf{k})$.
The same is true for the charge, therefore $s_\alpha(-\mathbf{k}) = -s_\alpha(\mathbf{k})$ is the eigenvalue corresponding to the same exact eigenvector $\mathcal{C}\ket{\alpha}_\mathbf{k} =\ket{\alpha}_{-\mathbf{k}}$.

This in particular implies that if there is a band crossing at $+\mathbf{k}_0$, there is another one at $-\mathbf{k}_0$. And the charge eigenvalues related to the crossing eigen states at $+\mathbf{k}_0$ are translated into charges for the crossing eigen states at $-\mathbf{k}_0$, but with their sign flipped.

\subsection{The two independent charges for non-degenerate models}

$H$ is assumed to be non-degenerate almost everywhere, and the charges are assumed to be orthogonal to identity and $H(\mathbf{k})$. Algebraically, this translates into:
\begin{subequations}
    \begin{align}
        Tr(H(\mathbf{k})Q(\mathbf{k})) &= 0\\
        Tr(\mathbb{I}Q(\mathbf{k})) &= 0.
    \end{align}
\end{subequations}
The space of such charges evaluated at a particular $\mathbf{k}$-point describes a 2-dimensional vector space. We can choose 2 orthogonal charges in this vector space $Q_I(\mathbf{k})$ and $Q_{II}(\mathbf{k})$.

Then all linear combinations of $Q_I(\mathbf{k})$ and $Q_{II}(\mathbf{k})$ are valid choices for the charge.

One particular case that is of interest for the examples we considered in this paper is that we can always choose the charge to attribute the same eigenvalue to any 2 given bands:

let $Q(\mathbf{k}) = XQ_I(\mathbf{k})+YQ_{II}(\mathbf{k})$, we label $s_{I1},...,s_{I4}$ and  $s_{II1},...,s_{II4}$ the eigenvalues of $Q_I(\mathbf{k})$ and $Q_{II}(\mathbf{k})$ associated to the eigenstates $\ket{1},...,\ket{4}$, then the equation:
\begin{equation}
    X(s_{I1}-s_{I2}) + Y(s_{II1}-s_{II2}) = 0
\end{equation}
has at least one non-zero solution $(X_0,Y_0)$. This translates to:
\begin{equation}
    Xs_{I1}+ Ys_{II1} = Xs_{I2}+ Ys_{II2}
\end{equation}
so that the resulting charge $Q(\mathbf{k})$ has degenerate eigenvalues:
\begin{equation}
    s_1 = s_2.
\end{equation}

\subsection{concrete examples}

In section \ref{part: Model_analysis}, we determined a charge (Eq.\ref{eq: Q_pwave}) for the nodal phase:
\begin{equation}
    \label{eq: Q_pwave2}
    Q(\mathbf{k}) = \sin(k_z)\left[ \sin(k_z) \tau^x \otimes \sigma^y-( \mu - \sum_i \cos(k_i)) I \otimes \sigma^z\right]
\end{equation}
This charge has 2-fold degenerate eigenvalues for its two middle bands and lowest-highest bands (FIG. \ref{fig:charge_nodal2}) and the charge flips when passing from $+\mathbf{k}_{deg}$ to $-\mathbf{k}_{deg}$

\bibliography{references} 
\bibliographystyle{apsrev4-2}
\nocite{*}

\end{document}